\newcommand{\famsym}{\mathcal{F}}
\newcommand{\pert}{\pi_{\epsilon}}
\newcommand{\Conv}{\operatorname{Conv}}
\newtheorem{theorem}{Theorem}
\newtheorem{lemma}[theorem]{Lemma}
\newtheorem{corollary}[theorem]{Corollary}
\newtheorem{obs}{Observation}
\newtheorem{defini}{Definition}
\newcommand{\LabelQuote}[2]{\vspace{0.5cm}%
     \parbox{10cm}{\em #1}\hspace*{0.5cm}(#2)\\[0.5cm]}
\def\QED{\ensuremath{{\square}}}
\def\markatright#1{\leavevmode\unskip\nobreak\quad\hspace*{\fill}{#1}}
\newenvironment{proof}
  {\begin{trivlist}\item[\hskip\labelsep{\emph{Proof.}}]}
  {\markatright{\QED}\end{trivlist}}
\newenvironment{namedproof}[1]
  {\begin{trivlist}\item[\hskip\labelsep{\emph{#1.}}]}
  {\markatright{\QED}\end{trivlist}}
\renewcommand{\phi}{\varphi}
\renewcommand{\epsilon}{\varepsilon}
\newcommand{\R}{\mathbb{R}}
\newcommand\blfootnote[1]{%
	\begingroup
	\renewcommand\thefootnote{}\footnote{#1}%
	\addtocounter{footnote}{-1}%
	\endgroup
}
\def\inst#1{$^{#1}$}
\begin{document}

\title{No Selection Lemma for Empty Triangles}

\author{Ruy Fabila-Monroy\inst{1}
\and Carlos Hidalgo-Toscano\inst{1}
\and Daniel Perz\inst{2}  
\and Birgit~Vogtenhuber\inst{2}}

\date{}

\maketitle

\begin{center}
\inst{1}Departamento de Matem\'aticas, Cinvestav, Ciudad de M\'exico, M\'exico, \\
ruyfabila@math.cinvestav.edu.mx, cmhidalgo@math.cinvestav.mx 
\\\ \\
\inst{2}Institute of Software Technology, Graz University of Technology, Graz, Austria, \\
daperz@ist.tugraz.at, bvogt@ist.tugraz.at
\end{center}

\begin{abstract}
Let $S$ be a set of $n$ points in general position in the plane. 
The Second Selection Lemma states that for any family of $\Theta(n^3)$ triangles spanned by $S$, there exists a point of the plane that lies in a constant fraction of them.
For families of $\Theta(n^{3-\alpha})$ triangles, with $0\le \alpha \le 1$, there might not be a point in more than $\Theta(n^{3-2\alpha})$ of those triangles.
An  empty triangle of $S$ is a triangle spanned by $S$
not containing any point of $S$ in its interior. B\'ar\'any conjectured that there exist an edge
spanned by $S$ that is incident to a super constant number of empty triangles of $S$. The number of empty triangles
of $S$ might be $O(n^2)$; in such a case, on average, every edge spanned by $S$ is incident to  a constant number
of empty triangles. The conjecture of B\'ar\'any suggests that for the class of empty triangles the above upper bound
might not hold. In this paper we show that, somewhat surprisingly,
the above upper bound does in fact hold for empty triangles. Specifically, we show that for any integer $n$ and real number $0\leq \alpha \leq 1$ there exists a point set of size $n$ with $\Theta(n^{3-\alpha})$ empty triangles such that any point of the plane is only in $O(n^{3-2\alpha})$ empty triangles.

\blfootnote{
	\begin{minipage}[l]{0.22\textwidth} \vspace{-6pt}\hspace{-10pt}\includegraphics[trim=10cm 6cm 10cm 5cm,clip,scale=0.155]{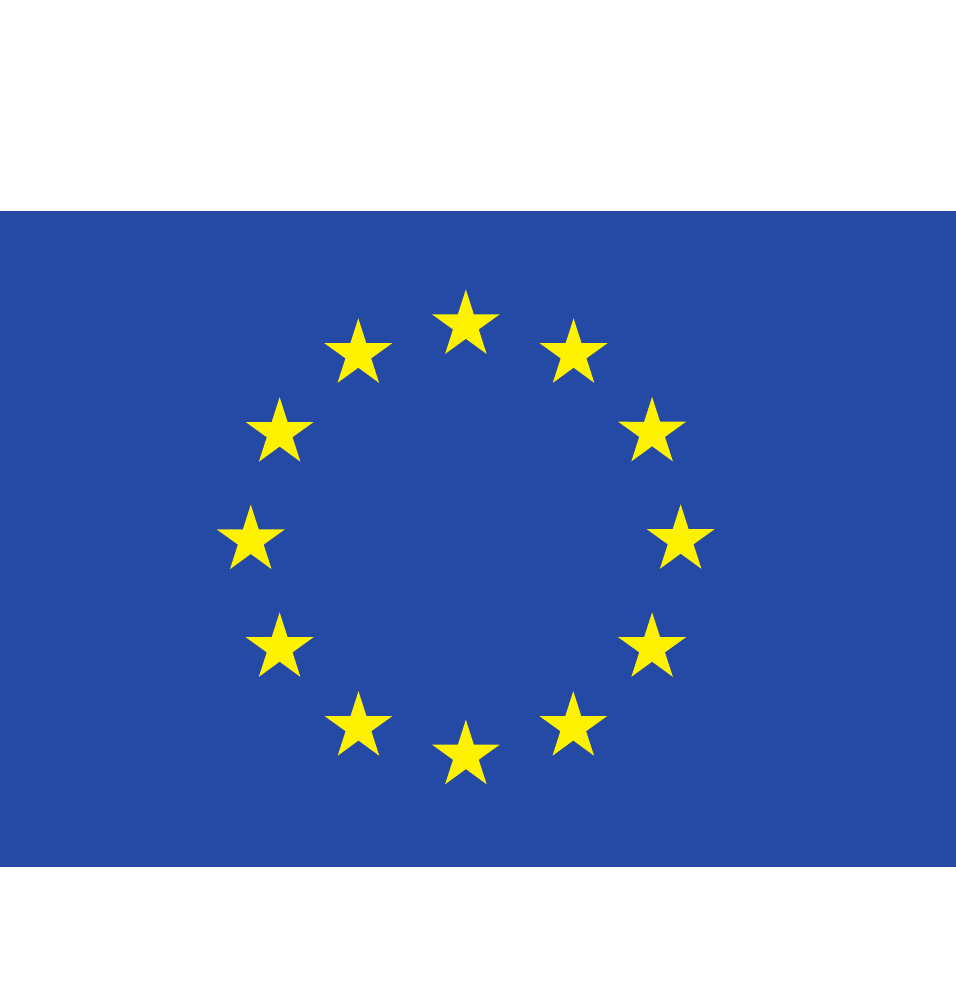} \end{minipage} \hspace{-1.55cm} \begin{minipage}[l][1cm]{0.855\textwidth} 
		\vspace{-1pt}
		This project has received funding from the European Union's Horizon 2020 research and innovation programme under the Marie Sk\l{}odowska-Curie grant agreement No 734922. \end{minipage}

\hspace{-23pt} 
\begin{minipage}[l]{\textwidth}
D.P.\, and B.V.\, were partially supported by the Austrian Science Fund within the 
	collaborative DACH project \emph{Arrangements and Drawings} as FWF project \mbox{I 3340-N35}.\\[-1ex]
\end{minipage}
}

\end{abstract}

\section{Introduction}
Let $S$ be a set of $n$ points in general position\footnote{A point set $S \subset \R^d$ is in general position if for every integer $1 < k \le d+1 $, no subset of $k$ points of $S$ is contained in a $(k\!-\!2)$-dimensional flat.}
 in the plane.
A \emph{triangle of $S$} is a triangle whose vertices are points of $S$.
We say that a point $p$ of the plane \emph{stabs} a triangle $\Delta$ if it lies in the interior 
of $\Delta$.
Boros and F\"uredi~\cite{boros1984number} showed that for any point set $S$ in general position in the plane, there exists a point in the plane which stabs a
constant fraction ($\frac{n^3}{27}+O(n^2)$) of the triangles of~$S$.
B\'ar\'any~\cite{BARANY-82} extended the result to $\R^d$; 
he showed that there exists a constant $c_d>0$, 
depending only on $d$, such that for any 
point set $S_d \subset \R^d$ in general position,
there exists a point in 
$\R^d$ which is in the interior of $c_d n^{d+1}$ $d$-dimensional simplices spanned by $S_d$. 
This result is known as 
\emph{First Selection Lemma}~\cite{MatousekBook-2002}.

Later, researchers considered the problem of the existence of a point in many triangles of a given family, $\famsym$, of triangles of $S$.
B\'ar\'any, F\"uredi and Lov\'asz~\cite{barany1990number} showed that for any point set $S$ in the plane in general position and any family $\famsym$ of $\Theta(n^3)$ triangles of $S$, there exists a point of the plane which stabs $\Theta(n^3)$ triangles from~$\famsym$.
This result, generalized to $\R^d$ 
by Alon et al.~\cite{alon1992point}, is 
now also known as 
\emph{Second Selection Lemma}~\cite{MatousekBook-2002}.

Both results for the plane require families of $\Theta(n^3)$ triangles of $S$.
It is  
natural to ask about families of triangles of smaller cardinality. 
For this question, Aronov et al.~\cite{SecondSelectionAronov} showed that for every $0 \leq \alpha \leq 1$ and every family $\famsym$ of $\Theta(n^{3-\alpha})$ triangles of $S$, there exists a point of the plane which stabs $\Omega(n^{3-3\alpha}/\log^5 n)$ triangles of $\famsym$.
This lower bound was 
improved by Eppstein~\cite{EppsteinImprovedBound} to the maximum of $n^{3-\alpha}/(2n-5)$ and $\Omega(n^{3-3\alpha}/\log^2 n)$.
A mistake in one of the proofs was later found and fixed by Nivasch and Sharir~\cite{EppsteinRevisited}.
Furthermore, Eppstein~\cite{EppsteinImprovedBound} constructed $n$-point sets and families of $n^{3-\alpha}$ triangles in them such that every point of the plane is in at most $n^{3-\alpha}/(2n-5)$ triangles for $\alpha \geq 1$ and in at most $n^{3-2\alpha}$ triangles for $0\leq \alpha \leq 1$.
Hence, for the number of triangles of a family $\famsym$ that can be guaranteed to simultaneously contain some point of the plane,  there is a continuous transition from a linear fraction for $|\famsym|=O(n^2)$ to a constant fraction for $|\famsym|=\Theta(n^3)$.

A triangle of $S$ is said to be \emph{empty} if it does not contain any points of $S$ in its interior. Let $\tau(S)$ be the number of empty triangles of $S$.
It is easily shown that $\tau(S)$ is $\Omega(n^2)$; Katchalski and Meir~\cite{kat} showed that there exist $n$-point sets $S$ with $\tau(S)=\Theta(n^2)$. 
Note that for such point sets, an edge of $S$ 
is on average part of a constant number of empty triangles of $S$. 
However, B{\'a}r{\'a}ny conjectured  that there is always an edge of $S$ which is part of a super constant number of empty triangles of $S$;
see~\cite{high_degree_erdos,high_degree_barany}.
B{\'a}r{\'a}ny et al.~\cite{BaranyMR13} proved this conjecture for random $n$-point sets, showing that for such such sets,
$\Theta(n/\log n)$ empty triangles are expected to share an edge. Note that the expected total number of empty triangles in such point sets is $\Theta(n^2)$; see~\cite{Va95}.

B\'ar\'any's conjecture suggests that perhaps there is always a point of the plane stabbing many
empty triangles of $S$, for any set $S$ of $n$ points in general position.
Naturally, the mentioned lower bounds for the number of triangles stabbed by a point of the plane also apply for the family of all empty triangles of $S$. 
In contrast, the upper bound constructions of Eppstein do not apply, since they contain non-empty triangles or do not contain all empty triangles of their underlying point sets.  
In this paper, we show that the existence of a point in more triangles than these upper bounds 
for general families of triangles is not guaranteed; hence the title of our paper. Specifically, we prove the following.

\begin{restatable}{theorem}{emptySelectionLemma}
	\label{thm:LensSquaredHortonSets}
	For every integer $n$ and every $0\leq \alpha \leq 1$, 
	there exist sets $S$ of $n$~points with $\tau(S) = \Theta{(n^{3-\alpha})}$ empty triangles 
	where every point of the plane stabs 
	$O{(n^{3-2\alpha})}$ empty triangles of $S$.
\end{restatable}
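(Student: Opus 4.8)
The plan is to realize $S$ as a \emph{two-level (``squared'') Horton construction}. Fix parameters $k=\Theta(n^{\alpha})$ and $m=\Theta(n^{1-\alpha})$ with $km=n$. Start from a Horton set $H$ on $k$ points and, around each point $i$ of $H$, place a scaled-down copy $C_i$ of a fixed base configuration $B$ on $m$ points, of diameter $\epsilon$; I call the $C_i$ the \emph{clusters}. For $B$ I would use $m$ points on a tiny convex cap (a ``lens''), possibly stretched/oriented per cluster. The diameter $\epsilon$ is taken so small, relative to the minimum feature size of $H$, that each cluster behaves like a ``fat point'' of $H$ and the whole set is in general position. As a building block I would first establish the extreme case $\alpha=1$: a Horton set on $N$ points has $\Theta(N^{2})$ empty triangles, every point of the plane lies in only $O(N)$ of them, and every \emph{vertex} lies in only $O(N)$ of them. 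This is exactly the statement for $\alpha=1$, and it is reused both for the macro set $H$ (of size $k$) and, if $B$ is Horton, inside each cluster.

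First I would count $\tau(S)$ by classifying each empty triangle according to how many clusters contain its three vertices. The dominant contribution is the \emph{three-cluster} case, with one vertex in each of three distinct clusters $C_i,C_j,C_k$. Because the clusters are tiny, such a micro-triangle can only be empty when the macro-triangle $ijk$ is an empty triangle of $H$; and then the number of empty micro-triangles equals $N_iN_jN_k$, where $N_i$ is the number of vertices of $C_i$ whose ``forward cone'' towards the interior of $ijk$ (of half-angle $\theta_i/2$, with $\theta_i$ the macro-angle at $i$) contains no other point of $C_i$. For a convex cap of $m$ points one has $N_i=\Theta(m)$ whenever $\theta_i$ is bounded away from $\pi$, so each sufficiently \emph{fat} empty macro-triangle contributes $\Theta(m^3)$. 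Since $H$ has $\Theta(k^2)$ empty triangles, the target is $\Theta(k^2m^3)=\Theta(n^{2\alpha}\,n^{3-3\alpha})=\Theta(n^{3-\alpha})$. The single-cluster triangles contribute $k\cdot\tau(B)=\Theta(km^3)=\Theta(n^{3-2\alpha})$, and the two-cluster ``sliver'' triangles are forced (by emptiness within the fat cluster) to use a convex-hull edge, giving only $O(k^2m^2)=O(n^2)$; both are of lower order, so $\tau(S)=\Theta(n^{3-\alpha})$.

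Next I would bound, for an arbitrary point $p$, the number of empty triangles it stabs, aiming for $O(n^{3-2\alpha})=O(\tau(S)/k)$; equivalently, no point may capture more than a $\Theta(1/k)$ fraction of the empty triangles. For three-cluster triangles I use the $\alpha=1$ property of $H$: whether $p$ lies in the macro region or inside some cluster, it lies in only $O(k)$ empty macro-triangles (resp.\ $O(k)$ empty macro-triangles through the corresponding vertex), and each such macro-triangle contributes at most $N_iN_jN_k\le m^3$ micro-triangles, for a total of $O(k\,m^3)=O(n^{3-2\alpha})$. Single-cluster triangles through $p$ number at most $O(m^3)=O(n^{3-3\alpha})$, and the two-cluster slivers contribute $O(n^2)$; both are $O(n^{3-2\alpha})$. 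Thus the stabbing upper bound follows from the crude per-macro-triangle bound $m^3$ together with the $O(k)$ incidence bound for $H$.

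The hard part is therefore \emph{not} the stabbing bound but the matching lower bound $\tau(S)=\Omega(n^{3-\alpha})$: it requires the three-cluster count $\sum_{ijk\ \mathrm{empty}} N_iN_jN_k$ to be $\Omega(k^2m^3)$, which in turn needs $\Omega(k^2)$ empty macro-triangles that are effectively fat, i.e. with $N_iN_jN_k=\Theta(m^3)$. A plain Horton set is very flat, so most of its empty triangles are thin slivers (angles near $0$ and $\pi$), for which a fixed circular cluster gives $N_i=o(m)$ at the obtuse corner; this is the central obstacle. I expect to resolve it by tuning the two levels together: either by using a macro set that still has $\Theta(k^2)$ empty triangles and $O(k)$ stabbing but with $\Omega(k^2)$ of them fat, or by making each cluster an anisotropic, suitably oriented ``lens'' (nearly a segment perpendicular to the incident macro-triangles' interior directions), so that $N_i=\Theta(m)$ survives even for wide cones. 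Verifying that a single cluster shape simultaneously achieves $N_i=\Theta(m)$ for all its incident empty macro-triangles --- while keeping everything in general position and the stabbing bound intact --- is the technical heart of the argument; the remaining bookkeeping (the hidden constants and the integrality of $k$ and $m$ for arbitrary $n$) is routine.
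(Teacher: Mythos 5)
Your two-level architecture and parameter choice (a macro Horton-type set of size $\Theta(n^{\alpha})$ with clusters of size $\Theta(n^{1-\alpha})$) coincide with the paper's, but both places where you defer the work are genuine gaps, and the first is exactly the one the paper's construction is designed to circumvent. You derive the lower bound $\tau(S)=\Omega(n^{3-\alpha})$ from three-cluster triangles, which forces you to exhibit $\Omega(n^{2\alpha})$ \emph{fat} empty macro-triangles; you admit you cannot do this (``I expect to resolve it\ldots''). The paper never needs fat macro-triangles, because its lower bound does not come from three-cluster triangles at all (for those it only proves an upper bound). Instead, each cluster is a $\wasylozenge$: four corner points joined by slightly \emph{concave} arcs, arranged so that for every pair $\wasylozenge_i,\wasylozenge_j$ an arc of $\wasylozenge_i$ and an arc of $\wasylozenge_j$ together form an empty convex set with $\Omega(n^{1-\alpha})$ points (Property~\ref{enum:arc} of the definition in Section~\ref{sec:diamondssquaredhortonset}). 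Every triple in that convex set spans an empty triangle, so \emph{every} one of the $\Theta(n^{2\alpha})$ pairs of clusters contributes $\Omega(n^{3-3\alpha})$ empty triangles with two vertices in one cluster and one in the other, giving $\Omega(n^{3-\alpha})$ with no condition whatsoever on the shape of empty macro-triangles. Note that your choice of a \emph{convex} cap as cluster shape forecloses precisely this route: as you yourself compute, with convex clusters the two-cluster triangles must use hull edges and number only $O(n^2)$. The missing idea is the non-convex cluster boundary; with it, the fat-triangle problem you call the technical heart simply disappears.

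The second gap is your ``building block'': you assert that for a plain Horton set on $N$ points every point of the plane stabs only $O(N)$ empty triangles. This is unproven; the visible-edge argument gives only $O(N\log N)$ (Lemma~\ref{lemma:hortoninterior}), and that log factor is fatal both for the case $\alpha=1$ of the theorem and for your three-cluster stabbing count, which would become $O(n^{3-2\alpha}\log n)$. Removing the log is a substantial part of the paper: it replaces the macro set by a \emph{squared} Horton set, i.e.\ an $\varepsilon$-perturbed $\sqrt{N}\times\sqrt{N}$ grid all of whose grid lines become Horton sets, proves that interior-empty grid triangles have height at most~$2$ (Lemma~\ref{lemma:triangle_height}), and then sums the per-line bound $O\big((\sqrt{N}/d)\log(\sqrt{N}/d)\big)$ over all slopes using Euler's totient function (Lemma~\ref{lemma:calc_horton_stabs}) to obtain $O(N)$ stabbing (Theorem~\ref{thm:horton_stabs}) together with the analogous vertex-incidence bound (Lemma~\ref{lemma:sqHorton_incident}). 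You would have to either prove your claim for plain Horton sets or import this machinery. A smaller slip: bounding the stabbed two-cluster triangles by their total number $O(n^2)$ does not give $O(n^{3-2\alpha})$ when $\alpha>1/2$; one needs a per-pair argument --- a stabbing point lies in the ``corridor'' $\Conv(C_i\cup C_j)$ of only $O(1)$ pairs --- which the paper enforces through Properties~\ref{point_on_line} and~\ref{three_cross} of its $\wasylozenge$ construction.
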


To prove Theorem~\ref{thm:LensSquaredHortonSets} for $\alpha=1$, we utilize the so called \emph{Horton sets} and \emph{squared Horton sets}. 
Horton~\cite{horton_1983} constructed a family of arbitrary large sets without large empty convex polygons.
Valtr~\cite{Va92} generalized Horton's construction and named the resulting sets ``Horton sets''.
Squared Horton sets were defined by Valtr~\cite{Va92} (as set $A_k$ in Section~4). 
B\'ar\'any and Valtr~\cite{BV2004} showed that squared Horton sets of size $n$ span only $\Theta(n^2)$ empty triangles. 

\paragraph{Outline.}
The remainder of this paper is organized as follows: 
In Section~\ref{sec:HortonSets}, we  give the definition of Horton sets and show several properties of them that will be of use for later sections. 
Section~\ref{sec:squaredHortonSets} considers squared Horton sets and contains a proof of Theorem~\ref{thm:LensSquaredHortonSets} for the case $\alpha=1$ (Theorem~\ref{thm:horton_stabs}). 
And in Section~\ref{sec:diamondssquaredhortonset} we present a generalized construction based on squared Horton sets, which we analyze to prove Theorem~\ref{thm:LensSquaredHortonSets}. 

\section{Horton sets}
\label{sec:HortonSets}

Let $X$ be a set of $n$ points in the plane such that no two points have
the same $x$-coordinate. 
In the following, 
we consider the points of $X$ 
in increasing order of their $x$-coordinates.
We denote with $X_0$ 
the subset of $X$ that contains every second point of $X$ (w.r.t.\ the $x$-order of the points), starting with the leftmost point of $X$. 
Similarly, $X_1=X\backslash X_0$ is the subset of $X$ that contains every second point of $X$  
(and does not contain the leftmost point of $X$).
In other words, if the points of $X$ are labeled $\{p_0, p_1,\dots, p_{n-1}\}$ in increasing $x$-order, then $X_0=\{p_0, p_2,\dots,\}$ and $X_1=\{p_1, p_3,\dots \}$.
In general, for a binary string $b$, we denote as $X_b$ the subset of vertices of $X$ that is obtained by recursively applying the above splitting. 
For example, $X_{10}$ consists of every second point of $X_1$ 
and does not contain the leftmost point of $X_1$.

Now consider two point sets $X$ and $Y$ in the plane such that no two points of $X \cup Y$ have
the same $x$-coordinate. We say that $Y$ is \emph{high above}
$X$ if every line passing through two points of $Y$ is above every
point of $X$, and $X$ is \emph{deep below}
$Y$ if every line passing through two points of $X$ is below every
point of $Y$. 

Using the above notation, we can now define Horton sets.
\begin{defini}
Let $H$ be a set of $n$ points in general position in the plane, such that no two points of $H$
have the same $x$-coordinate. 
Then $H$ is a \textbf{Horton set} if
\begin{enumerate}
 \item $|H|\leq 2$; or 
 
 \item $|H|>2$, $H_{0}$ and $H_{1}$ are Horton sets, and $H_1$ is high above $H_0$ and $H_0$ is deep below $H_1$.
\end{enumerate}
\end{defini}

One classic way to obtain a Horton $H=S_k$ set with $2^k$ points is by starting with a set $S_1$ of two points on a horizontal line, and then iteratively duplicating it by adding a translated copy $S'_i$ of $S_i$, where $S'_i$ is translated to the right by exactly half the $x$-distance between the first two points of $S_i$ and the translation in $y$-direction is such that $S'_i$ lies high above $S_i$. In the resulting Horton set, all points are evenly spaced in $x$-direction.

The following observation states that Horton sets have nice subset properties. They are directly implied by their definition.

\begin{obs}\label{obs:hortonsubsets}
Let $H=\{p_0, \ldots, p_{n-1}\}$ be a Horton set with points labeled in increasing $x$-order. Then for any $0 \leq i \leq j \leq n-1$, the subset $\{p_i, p_{i+1}, \ldots, p_j\}$ of consecutive points in $x$-direction again forms a Horton set. Similarly, for any integer $k$ and $0 \leq i \leq kj \leq n-1$, the set $\{p_i, p_{i+k}, p_{i+2k}, \ldots, p_{i+jk}\}$ is again a Horton set.
\end{obs}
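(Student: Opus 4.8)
The plan is to prove both statements by a single induction on $|H|$, using the recursive definition of Horton sets together with the fact that the relations ``high above'' and ``deep below'' are inherited by subsets. The first fact I would record is a monotonicity observation: if $Y$ is high above $X$ and $Y' \subseteq Y$, $X' \subseteq X$, then $Y'$ is high above $X'$, and symmetrically for ``deep below''. This is immediate, since every line through two points of $Y'$ is a line through two points of $Y$, and every point of $X'$ is a point of $X$, so the defining inequalities for $Y'$ and $X'$ are a subset of those guaranteed for $Y$ and $X$. General position and the distinctness of $x$-coordinates are likewise inherited by any subset, so these hypotheses never cause trouble.

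For the consecutive case, let $C = \{p_i, p_{i+1}, \ldots, p_j\}$ and split it into $C_0$ and $C_1$ by $x$-parity as in the definition. The key combinatorial point is that $C_0 = \{p_i, p_{i+2}, \ldots\}$ and $C_1 = \{p_{i+1}, p_{i+3}, \ldots\}$ are exactly blocks of consecutive points of $H_{\,i \bmod 2}$ and $H_{\,(i+1)\bmod 2}$, respectively: the points of $C$ whose index has the parity of $i$ are consecutive within that half of $H$, and similarly for the other parity. Since $|H_0|, |H_1| < |H|$, the induction hypothesis shows that $C_0$ and $C_1$ are themselves Horton sets, and the monotonicity observation applied to the relation between $H_0$ and $H_1$ yields the required ``high above''/``deep below'' relation between $C_0$ and $C_1$. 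Thus $C$ satisfies the recursive definition.

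The arithmetic-progression case is handled in the same way. Writing $A = \{p_i, p_{i+k}, \ldots, p_{i+jk}\}$ and splitting by $x$-parity gives $A_0 = \{p_i, p_{i+2k}, \ldots\}$ and $A_1 = \{p_{i+k}, p_{i+3k}, \ldots\}$, two progressions with common difference $2k$. When $k$ is even, all of $A$ lies in a single half $H_{\,i\bmod 2}$ and is a progression there with difference $k/2$, so one appeal to the induction hypothesis on that half finishes the case. When $k$ is odd, $A_0$ and $A_1$ land in opposite halves of $H$ and are progressions there with difference $k$, so the induction hypothesis gives that each is a Horton set and the monotonicity observation again supplies the relation between them. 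In both cases $A$ meets the recursive definition.

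The one place the argument needs care is orientation, and I expect this to be the only genuine obstacle. When the starting index $i$ is odd, the half of $H$ playing the role of the upper part $C_1$ (or $A_1$) is $H_0$ rather than $H_1$, so what one literally obtains is a Horton set reflected about a horizontal line. This is harmless: the class of point sets that are Horton sets up to such a reflection is closed under both operations, the induction goes through verbatim for this class, and every property of Horton sets used later (in particular the count of empty triangles) is invariant under reflection. I would therefore either carry the strengthened inductive statement ``$C$, resp.\ $A$, is a Horton set up to reflection about a horizontal line'' throughout, or simply agree at the outset that Horton sets are understood up to this reflection.
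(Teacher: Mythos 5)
Your decomposition is exactly the intended argument (the paper itself offers no proof beyond declaring the observation immediate from the definition): split $C$, resp.\ $A$, by parity, note that the two parts are consecutive blocks, resp.\ arithmetic progressions, inside $H_0$ and $H_1$, apply induction, and inherit ``high above''/``deep below'' by your monotonicity observation. The gap is in the step you yourself flag and then dismiss as harmless. The class of sets that are Horton sets \emph{up to a single reflection about a horizontal line} is \emph{not} closed under these operations, because a reflection can be forced at one level of the recursion and forbidden at another, and one global reflection cannot do both. Concretely, take the $8$-point Horton set from the paper's doubling construction, with points at heights $0,\,h_2,\,h_1,\,h_1{+}h_2,\,0,\,h_2,\,h_1,\,h_1{+}h_2$ in $x$-order, where $h_2\gg h_1>0$, and let $C=\{p_1,\dots,p_5\}$. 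Then $C_0=\{p_1,p_3,p_5\}$ is a genuine Horton triple ($p_3$ lies \emph{above} the line $p_1p_5$), while $C_1=\{p_2,p_4\}$ lies \emph{deep below} $C_0$. So $C$ is not a Horton set (the wrong half is on top), but its reflection is not one either: reflecting puts the halves in the correct vertical order while turning $C_0$ into a triple whose middle point lies \emph{below} the chord, which is not Horton. In fact nothing can repair this example: all five points of $C$ are in convex position, whereas every $5$-point Horton set has exactly four extreme points ($q_2$ is always interior to the hull of the other four), and the number of extreme points is invariant under affine maps, reflections, and order-type equivalence. Hence the strengthened inductive statement you propose to carry (``$C$ is a Horton set up to reflection about a horizontal line'') is false, and the induction breaks precisely at the mixed-parity step.

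The repair is to weaken the inductive invariant to the orientation-symmetric form of the definition: at every level of the recursion, \emph{one} of the two halves is high above the other (and that other is deep below it), with no prescription of which. This class is closed under both operations by your argument verbatim --- the induction hypothesis applies to $C_0$ and $C_1$, and monotonicity gives the relation in one order or the other, both of which are now allowed --- and it is this symmetric recursive structure, not literal Horton-ness nor Horton-ness up to a global reflection, that the paper's later lemmas (visible edges, the $O(n\log n)$ stabbing and incidence bounds) actually use: their proofs only ever need ``one half far above the other,'' never which one. Put differently, the reflection must be applied independently at each level of the recursion; that is a combinatorial exchange of the roles of the two halves, not a rigid motion of the plane. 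This also shows that the paper's own remark (that linear images of Horton sets are still ``called'' Horton sets) does not by itself make the observation literally true, so in your write-up you should state and prove the observation for the symmetric class explicitly.
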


We remark that a linear transformation of a Horton set, like for example a rotation, might no longer
be a Horton set by the above definition. However, the combinatorial properties of these sets do not change.
Hence, for convenience, we still call them  
Horton sets.

To analyze properties of the empty triangles of Horton sets, we define visible edges in Horton sets.
Let $H=H_0 \cup H_1$ be a Horton set of at least 4 points.
We say that an edge $e=(p_i,p_j)$, with $p_i,p_j \in H_0$, is \emph{visible from above} 
if $p_k$ is below the line spanned by $p_i$ and $p_j$ for every $p_k \in H_0$ with $i < k < j$. 
Likewise,  an edge $e=(p_i,p_j)$, with $p_i,p_j \in H_1$, is \emph{visible from below} if
$p_k$ is above the line spanned by $p_i$ and $p_j$ for every $p_k \in H_1$ with $i < k < j$. 
An edge of $H$ is \emph{visible} if it is either visible from above or visible from below.

\begin{lemma} \label{lem:visible}
An edge $e$ spanned by two vertices of a Horton set $H$
	is visible from below (above) if and only if it is spanned by two consecutive vertices of $H_b$, 
	where~$b$ is a binary string consisting of a single $1$ followed by an arbitrary number of $0$s (a single $0$ followed by an arbitrary number of $1$s).
\end{lemma}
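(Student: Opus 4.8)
The plan is to prove the characterization for visibility from below; the case of visibility from above is entirely symmetric, obtained by interchanging the roles of the subscripts $0$ and $1$ and of ``above'' and ``below''. The first observation is that the defining condition for an edge $(p_i,p_j)$ to be visible from below only ever mentions points of $H_1$. Hence the statement is intrinsic to the Horton set $H_1$: writing $G=H_1$ and relabelling its points $q_0,\ldots,q_{m-1}$ in $x$-order, it suffices to prove that in any Horton set $G$ an edge $(q_a,q_b)$ with $a<b$ has every intermediate point $q_c$ (with $a<c<b$) above the line through $q_a,q_b$ if and only if $q_a,q_b$ are consecutive in $G_{0^k}$ for some $k\geq 0$. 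Since $H_{10^k}=(H_1)_{0^k}=G_{0^k}$, this yields exactly the binary strings in the statement.

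I would prove this reformulated claim by induction on $|G|$, the base case $|G|\leq 2$ being trivial. For the inductive step, split $G=G_0\cup G_1$ with $G_1$ high above $G_0$ and $G_0$ deep below $G_1$, and distinguish cases by the parities of $a$ and $b$. If $q_a,q_b\in G_0$, the line through them is spanned by two points of $G_0$, so by the deep-below property every point of $G_1$, in particular every intermediate one, lies above it automatically; hence the edge is visible from below in $G$ if and only if it is so in the smaller Horton set $G_0$, and induction together with $(G_0)_{0^k}=G_{0^{k+1}}$ produces exactly the strings with $k\geq 1$. If $q_a,q_b\in G_1$, then $q_{a+1}\in G_0$ lies strictly between them, and by the high-above property it lies below the $G_1$-line through $q_a,q_b$; thus no such edge is ever visible from below, consistent with the claimed family containing no edge with both endpoints in $G_1$. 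Finally, the adjacent edges $(q_a,q_{a+1})$, which are precisely the mixed-parity edges with $b=a+1$, have no intermediate point and are trivially visible, matching the string $1$ (the case $k=0$).

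The crux is to show that a mixed-parity edge $(q_a,q_b)$ that is \emph{not} adjacent, i.e.\ $b>a+1$, is never visible from below. The idea is to exhibit an intermediate point strictly below $q_aq_b$ by converting the deep-below property into a statement about this line. If $q_a\in G_0$ and $q_b\in G_1$, I would look at $q_{b-1}\in G_0$ (recall $b$ is odd): the segment $q_aq_{b-1}$ is a $G_0$-line, so by deep-below $q_b$ lies above it, and since $q_a,q_{b-1},q_b$ occur in this left-to-right order, the sign of the orientation determinant makes ``$q_b$ above line $q_aq_{b-1}$'' equivalent to ``$q_{b-1}$ below line $q_aq_b$''. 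In the symmetric situation $q_a\in G_1$, $q_b\in G_0$, one uses $q_{a+1}\in G_0$ and the $G_0$-line $q_{a+1}q_b$, above which $q_a$ lies by deep-below, which likewise forces $q_{a+1}$ below $q_aq_b$. Either way an intermediate point sits below the edge, so it is not visible.

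The only genuinely delicate part is this orientation bookkeeping in the mixed case: one must keep track of which endpoint is the ``high'' one (in $G_1$) and of the left-to-right order of the three points, so that the deep-below fact ``a point lies above a $G_0$-line'' really does translate into ``an intermediate point lies below the edge $q_aq_b$''. Everything else reduces to the three clean cases above. Assembling them, the visible-from-below edges are exactly the $G_{0^0}$-consecutive (adjacent) edges together with the $G_{0^j}$-consecutive edges for $j\geq 1$, i.e.\ the $G_{0^k}$-consecutive edges for all $k\geq 0$; applying the mirror statement to $G=H_0$ gives the visible-from-above edges as the $H_{01^k}$-consecutive edges, completing the lemma.
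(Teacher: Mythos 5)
Your proof is correct, and it takes a genuinely different route from the paper's. The paper proves the two implications separately: for the ``only if'' direction it works inside the ambient set $H$, takes the unique binary string $b'$ at which $p_i$ and $p_j$ get separated into different halves, and derives a contradiction (an intermediate point of $H_1$ below the edge) if $b'$ contains a $1$ or if the endpoints are not consecutive; for the ``if'' direction it inducts on the length of the string $b$, peeling off its \emph{last} $0$, which forces it to glue the two shorter visible edges $(p_i,p_k)$ and $(p_k,p_j)$ into one by a geometric argument comparing the lines $\ell_1$, $\ell_2$ and $\ell$. You instead first reduce to an intrinsic statement about the single Horton set $G=H_1$ (eliminating the ambient set and the leading $1$ of the string), and then run one induction on $|G|$, peeling off the \emph{first} $0$: the four-way case analysis on the parities of the endpoints delivers both directions of the equivalence simultaneously. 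This buys a real simplification of the ``if'' direction: when both endpoints lie in $G_0$, the deep-below property places every point of $G_1$ above the edge wholesale, so visibility in $G$ reduces at once to visibility in $G_0$ and no gluing argument is needed. What the paper's route buys is a more explicit handle on the string $b$ itself, which is exhibited directly rather than assembled through the recursion. Finally, the ``delicate'' orientation step you flag in the mixed-parity case --- converting ``$q_b$ lies above the $G_0$-line $q_aq_{b-1}$'' into ``$q_{b-1}$ lies below $q_aq_b$'' via the left-to-right order --- is sound (for three points in increasing $x$-order, all such above/below statements are equivalent to the orientation of the triple), and it is exactly the step the paper also needs, and leaves implicit, when it asserts in its non-consecutive case that the point $p_k\in H_{1b'0}$ lies below the line $p_ip_j$ even though that line has one endpoint in each half.
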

\begin{proof}
 For the first direction of the proof, let $(p_i,p_j)$ be an edge of $H$ that visible from below. 
 Then, by the definition of visibility, $p_i,p_j \in H_1$.
 Let $b'$ be the unique binary string such that $p_i,p_j \in H_{1b'}$
 and such that either $p_i \in H_{1b'0}$ and $p_j \in H_{1b'1}$, or $p_i \in H_{1b'0}$ and $p_j \in H_{1b'1}$. 
 Without loss of generality assume that $p_i \in H_{1b'0}$ and $p_j \in H_{1b'1}$. 

 Suppose first that $b'$ is of the form $b'=b_1 1 b_2$ for some binary strings $b_1$, $b_2$. 
 Let $p_k$ be the point of $H_{b_1}$ that lies between $p_i$ and $p_j$. Note that $p_k \in H_{b_10}$. 
 By the definition of Horton sets $p_k$ is below the line spanned by $p_i$ and $p_j$;
 this contradicts the assumption that $(p_i,p_j)$ is visible from below. 
 Thus, $b'$ is a binary string that only consists of $0$s. 
	
 Next, suppose that $p_i$ and $p_j$ are not consecutive vertices of $H_{1b'}$. 
 Then there exists a point $p_k \in H_{1b'0} \subset H_1$ that lies between $p_i$ and $p_j$. 
 Again, by the definition of Horton sets, $p_k$ is below the line spanned by $p_i$ and $p_j$, 
 which contradicts the assumption that $(p_i,p_j)$ is visible from below.
 Hence, for $b=1b'$, $p_i$ and $p_j$ are consecutive vertices in $H_{b}$.	
 The reasoning for an edge $(p_i,p_j)$ that is visible from above is analogous, which completes the first direction of the proof.

 For the other direction, let $p_i,p_j$ be two consecutive points in $H_b$ for some binary string $b$ consisting of a single $1$ followed by an arbitrary number of $0$s.
 We proceed by induction on the length of $b$. If $b$ is empty then there is no point between $p_i$ and $p_j$ in $H_1$.
 Suppose that $b$ has length at least one. Let $b:=b'0$. There is exactly one point $p_k$ between $p_i$ and $p_j$
 in $H_{b'}$. Thus, $p_i$ and $p_k$ are consecutive points in $H_{b'}$. Likewise, $p_k$ and $p_j$ are consecutive points in $H_{b'}$.
 Let $\ell_1$ and $\ell_2$ be the two lines spanned by $(p_i,p_k)$, and $(p_k,p_j)$, respectively.
 By induction there are no points in $H_{b'}$ between $p_i$ and $p_k$, and below $\ell_1$. Likewise,
 there are no points in $H_{b'}$  between $p_k$ and $p_j$, and below $\ell_2$.
 Since $p_k$ is in $H_{b'1}$, $p_k$ is above the line $\ell$ spanned by $p_i$ and $p_j$.
 Thus, there are no points in $H_b$ between $p_i$ and $p_j$, and below $\ell$; this implies that
 $(p_i,p_j)$ is an edge visible from below of $H$.

 An analogous argument shows that if $p_i$ and $p_j$ are two consecutive points in $H_b$ for some binary string $b$ consisting of a single $0$ followed by an arbitrary number of $1$s,
 then $(p_i,p_j)$ is an edge of $H$ visible from above, which completes the proof.
\end{proof}

Note that visible edges are of central relevance for empty triangles in Horton sets.
Consider an empty triangle $\Delta$ in $H$ with vertices in both $H_0$ and $H_1$ 
and let $(p_i,p_j)$ be the edge of $\Delta$ such that both $p_i$ and $p_j$ are in $H_0$ or in $H_1$.
Then $(p_i,p_j)$ is a visible edge of $H$:
Assume without loss of generality that $p_i, p_j \in H_0$ 
and suppose for a contradiction that $(p_i,p_j)$ is not a visible edge of $H$. 
Then there exist a $p_k \in H_0$ such that  $p_k$ is above the line spanned by $p_i$ and $p_j$, and $i < k < j$. 
Since $H_1$ is high above $H_0$ this implies that $p_k$ is in the interior of $\Delta$; thus $\Delta$ is not empty.

The following two statements on empty triangles in Horton sets are  useful for proving our main theorem.

\begin{restatable}{lemma}{hortoninterior}\label{lemma:hortoninterior}
 Let $H$ be a Horton set of $n$ points. 
 Then every point $q$ of the plane stabs $O(n \log n)$ empty triangles of $H$.
\end{restatable}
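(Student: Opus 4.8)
The plan is to prove the bound by induction on $n=|H|$, exploiting the recursive structure $H=H_0\cup H_1$. Write $T(q)$ for the number of empty triangles of $H$ stabbed by a fixed point $q$. Every empty triangle of $H$ has either all three vertices in $H_0$, all three in $H_1$, or vertices in both parts (a \emph{mixed} triangle). Since $H_1$ is high above $H_0$ and $H_0$ is deep below $H_1$, every point of $H_1$ lies above every line spanned by two points of $H_0$; hence no point of $H_1$ can be interior to a triangle spanned by three points of $H_0$, and symmetrically. Thus a triangle with all vertices in $H_0$ is empty in $H$ if and only if it is empty in the Horton set $H_0$, and likewise for $H_1$. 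This gives
\[
T(q)\;\le\; T_0(q)+T_1(q)+M(q),
\]
where $T_0,T_1$ count the $q$-stabbed empty triangles of $H_0,H_1$ (each of size about $n/2$) and $M(q)$ counts the $q$-stabbed mixed empty triangles. If I can show $M(q)=O(n)$ for every $q$, then $T(n)\le 2\,T(n/2)+O(n)$, and this recursion of depth $O(\log n)$ solves to $T(n)=O(n\log n)$, as required.

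It remains to bound the mixed empty triangles through $q$. By the symmetry between ``above'' and ``below'' I treat those with two vertices $p_i,p_j\in H_0$ and apex $r\in H_1$. As observed before Lemma~\ref{lem:visible}, the edge $(p_i,p_j)$ is then \emph{visible from above}, so by Lemma~\ref{lem:visible} it is a pair of consecutive points of some $H_{01^m}$; summing the sizes of these sets shows there are only $O(n)$ candidate base edges in total. Since $r$ lies above the line through $p_i,p_j$, both the triangle and $q$ lie above that line. I would then encode the two remaining constraints angularly: writing $\theta_i(\cdot),\theta_j(\cdot)$ for the angles of the rays from $p_i$ and from $p_j$ (measured from the positive $x$-direction), one checks that $q\in\triangle(p_i,p_j,r)$ exactly when $\theta_i(r)>\theta_i(q)$ and $\theta_j(r)<\theta_j(q)$, i.e.\ $r$ lies in the fixed wedge with apex $q$ opening away from the base; and that a point $s\in H_1$ is interior to $\triangle(p_i,p_j,r)$ exactly when $\theta_i(s)<\theta_i(r)$ and $\theta_j(s)>\theta_j(r)$. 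Consequently the admissible apexes over a fixed base are the points of $H_1$ that are not dominated in the $(\theta_i,\theta_j)$-plane, and these form a monotone staircase.

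The count of $M(q)$ then becomes a count of staircase points lying in the $q$-wedge, summed over the $O(n)$ visible base edges. Here the recursive Horton structure of $H_1$ is the tool: viewed from a base far below, the emptiness staircase of $H_1$ is itself governed by the visible-from-below edges of $H_1$ (again via Lemma~\ref{lem:visible}, applied inside $H_1$), and Observation~\ref{obs:hortonsubsets} guarantees that the relevant subsets remain Horton sets. The idea is that wider base edges admit only few apexes in the wedge while narrower ones admit more, and that the two effects are reciprocal in the scale~$m$, so that the contributions over the $O(\log n)$ scales telescope to $O(n)$ rather than the naive $O(n\log n)$.

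The main obstacle is exactly this balancing step. The two constraints pull against each other, and the danger is real: in a near-degenerate (almost collinear) placement of $H_1$ a single base can be the base of $\Theta(n)$ empty triangles through one point, so a crude ``number of bases times apexes per base'' estimate fails. It is only the genuine Horton structure of $H_1$ --- which keeps its emptiness staircase short inside any fixed wedge --- that rules this out, and turning the monotone-staircase description of admissible apexes into a per-scale bound whose sum is $O(n)$ is the crux. Once that quantitative bound $M(q)=O(n)$ is in hand, the recursion of the first paragraph finishes the proof.
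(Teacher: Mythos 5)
Your setup---the split of the stabbed empty triangles into those inside $H_0$, those inside $H_1$, and the mixed ones, the observation that a triangle with all vertices in one half is empty in $H$ iff it is empty in that half, and the identification of visible base edges via Lemma~\ref{lem:visible}---is correct and mirrors the paper's preliminaries. But your whole proof rests on the quantitative claim $M(q)=O(n)$, which you explicitly leave unproved (``the crux''), and this gap cannot be closed because the claim is \emph{false}. Take the canonical Horton set with unit $x$-spacing and with the vertical offset at each level chosen enormously large compared to the spread of everything below it (the definition permits this), and place $q$ in the vertical gap between $\Conv(H_0)$ and $\Conv(H_1)$, at half the top offset $N$ and with central $x$-coordinate $q_x$. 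Fix a scale $m$ and a consecutive pair $(p_i,p_j)$ of $H_{01^m}$: this is a visible-from-above edge with $x$-extent $2^{m+1}$, sitting at height $o(N)$. For \emph{every} apex $r\in H_1$ the triangle $p_ip_jr$ is empty: at the heights of $H_0$ its cross-section stays within $(x_i-o(1),\,x_j+o(1))$, so visibility keeps $H_0$ out, and at the heights of $H_1$ it is a near-vertical sliver of width $\ll 1$ below $r$, so it misses the rest of $H_1$. Moreover, $q$ lies in $p_ip_jr$ exactly when $x_r$ lies in the interval $(2q_x-x_j,\,2q_x-x_i)$ of length $2^{m+1}$ (up to lower-order corrections), which contains $\Theta(2^m)$ points of $H_1$. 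Hence scale $m$ alone contributes $\Theta(n/2^{m})\cdot\Theta(2^m)=\Theta(n)$ mixed empty triangles stabbed by $q$, and summing over the $\Theta(\log n)$ scales gives $M(q)=\Theta(n\log n)$. This also kills your closing intuition: for the coarsest scale a \emph{single} base admits $\Theta(n)$ admissible apexes for this $q$, Horton structure notwithstanding. With $M(q)=\Theta(n\log n)$, your recursion only yields $O(n\log^2 n)$.

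The paper's proof avoids exactly this trap and never needs a per-level $O(n)$ bound. It first argues that all empty triangles stabbed by $q$ are mixed at the \emph{single} level $b$ for which $q\in\Conv(H_b)$ but $q\notin\Conv(H_{b0})$ and $q\notin\Conv(H_{b1})$ (note that $\Conv(H_0)$ and $\Conv(H_1)$ are disjoint, so at every other level of your recursion one of the two subproblems is void and the mixed count is zero). It then bounds the mixed triangles at that one level by charging each of them to the pair (visibility string $b'$, apex $p_k$): there are $O(\log n)$ admissible strings by Lemma~\ref{lem:visible} and at most $n$ apexes, and for each such pair at most one consecutive pair of $H_{b'}$ can serve as the base of an empty triangle with apex $p_k$ containing $q$. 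In short, the correct bookkeeping is ``one triangle per (scale, apex) pair,'' giving $O(n\log n)$ in one shot; the example above shows that any scheme attempting to give each base $O(1)$ apexes, or each level $O(n)$ mixed triangles, must fail.
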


\begin{proof}
Assume that $q \in \Conv(H)$, as otherwise $q$ stabs no empty triangle of~$S$.  
Let $b$ be the binary string such $q \in \Conv(H_b)$, 
but $q \notin \Conv(H_{b0})$  and $q \notin \Conv(H_{b1})$.
Let $\Delta$ be an empty triangle of $H$ stabbed by $q$. Note that either
two vertices of $\Delta$ lie in $H_{b0}$ and one vertex in $H_{b1}$, or 
two vertices of $\Delta$ lie in $H_{b1}$ and one vertex in~$H_{b0}$. 
Let $(p_i,p_j)$ be the edge of $\Delta$ such that both $p_i,p_j$ are in
 $H_{b0}$, or both $p_i,p_j$ are in  $H_{b1}$. Let $p_k$ be the other vertex of~$\Delta$.
Recall that $(p_i,p_j)$ is a visible edge. Let $b'$ be a binary string as in Lemma~\ref{lem:visible}
such that $p_i$ and $p_j$ are two consecutive vertices in~$H_{b'}$. Note that the only two consecutive
vertices of $H_{b'}$, that together with $p_k$ form an empty triangle containing $q$, are $p_i$ and $p_j$.
The number possible values for $b'$ is at most $2 \log_2 n$, and the number of possible choices
for $p_k$ is at most $n$. Therefore, the number of empty triangles stabbed by~$q$ is $O(n \log n)$.
\end{proof}

\begin{restatable}{lemma}{hortonincident}\label{lemma:hortonincident}
Let $H$ be a Horton set of $n$ points.
Then every point of $H$ is incident to $O(n \log n)$ empty triangles of $H$.
\end{restatable}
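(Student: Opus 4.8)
The plan is to adapt the proof of Lemma~\ref{lemma:hortoninterior}, replacing the stabbing point $q$ by the incident vertex $p$ and accounting for the fact that $p$ is now a vertex of the triangles rather than an interior point. First I would fix a point $p \in H$ and, for each empty triangle $\Delta$ incident to $p$, consider the longest binary string $b$ such that all three vertices of $\Delta$ lie in $H_b$ (the deepest node of the recursive splitting that still contains $\Delta$). Then the three vertices split two-to-one among $H_{b0}$ and $H_{b1}$, and by the observation preceding Lemma~\ref{lemma:hortoninterior} applied to the Horton set $H_b$, the two vertices lying in the same child form a visible edge of $H_b$, which by Lemma~\ref{lem:visible} is a pair of consecutive vertices of some $H_{bc}$ with $c$ a single $1$ followed by $0$s or a single $0$ followed by $1$s. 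Since $p \in H_b$, either $p$ is the lone vertex in one child or $p$ is an endpoint of this visible base edge, and I would split the count along these two cases (together with the symmetric exchange of ``visible from below'' and ``visible from above'').

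In the first case ($p$ is the lone vertex), for each fixed $b$ there is at most one empty triangle per visible edge of $H_b$ lying in the other child; by Lemma~\ref{lem:visible} and a geometric-series count over the admissible strings $c$, the number of such edges is $O(|H_b|)$, and summing $|H_b|$ over the $O(\log n)$ ancestors of $p$ gives a geometric series, contributing $O(n)$ in total. In the second case ($p$ is an endpoint of the base edge), for each fixed $b$ the base edge must be one of the visible edges of $H_b$ incident to $p$; since $p$ has at most two neighbors in each $H_{bc}$ and only $O(\log n)$ strings $c$ are relevant, there are $O(\log n)$ such edges, while the apex may be any of the $O(|H_b|)$ points of the other child. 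Hence this case contributes $O(|H_b| \log n)$ for each $b$, and pulling the $\log n$ factor out of the geometric sum over the ancestors of $p$ yields $O(n \log n)$. Adding the symmetric subcases, the total number of empty triangles incident to $p$ is $O(n \log n)$.

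The step I expect to be the main obstacle is controlling the second case without incurring an extra logarithmic factor: a naive bound would multiply the $O(\log n)$ choices of base edge, the $O(|H_b|)$ choices of apex, and the $O(\log n)$ levels, giving $O(n \log^2 n)$. The point that avoids this is that the sizes $|H_b|$ shrink geometrically along the ancestors of $p$, so summing $|H_b| \log n$ over all levels retains only a single $\log n$ factor. A secondary point requiring care is to verify that the base edge is genuinely visible in $H_b$ rather than merely in $H$; this is exactly what applying the observation preceding Lemma~\ref{lemma:hortoninterior} to the sub-Horton set $H_b$ provides, after which Lemma~\ref{lem:visible} supplies the description of visible edges as consecutive pairs in the sets $H_{bc}$ that drives the counting.
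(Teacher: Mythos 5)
Your proof is correct and takes essentially the same route as the paper's: both classify each empty triangle incident to $p$ by its visible base edge (via the observation after Lemma~\ref{lem:visible} and the characterization of visible edges as consecutive pairs in the sets $H_{bc}$), and both split into the case where $p$ is an endpoint of that edge --- which dominates, with $O(\log n)$ choices of edge times $O(n)$ choices of apex --- and the case where $p$ is the apex. If anything, your treatment of the apex case, summing the $O(|H_b|)$ visible edges geometrically over the ancestors of $p$ to get $O(n)$, is more careful than the paper's, which asserts that for a fixed string $b$ there is exactly one consecutive pair of $H_b$ forming an empty triangle with apex $p$ (a uniqueness claim that can already fail for an $8$-point Horton set, where the apex $p_1$ forms empty triangles with all three consecutive pairs of $H_0$), although both counts stay safely within the stated $O(n\log n)$ bound.
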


\begin{proof}
Let $q \in H$. Let $\Delta$ be an empty triangle of $H$ containing $q$ as a vertex. Let $(p_i,p_j)$ be the edge of $\Delta$
that is a visible edge of $H$. Let $p_k$ be the vertex of $\Delta$ distinct from $p_i$
and $p_j$. Let $b$ be the binary string for $(p_i,p_j)$ as in Lemma~\ref{lem:visible}, such that
$p_i$ and $p_j$ are two consecutive points of $H_b$. Suppose that $q$ is equal
to one of $p_i$ and $p_j$. For a fixed $b$ there are at most two possible choices for the other vertex
of $(p_i,p_j)$, and at most $n/2$ possible choices for $p_k$.  
Suppose that $q$ is equal to $p_k$. Then for a fixed $b$ there is exactly one choice for $p_i$
and $p_j$. Since the number of possible values for $b$ is $O(\log n)$, 
there are at most $O( \log n)$ empty triangles of $H$ containing $q$ as a vertex.
\end{proof}

\section{Squared Horton sets}\label{sec:squaredHortonSets}

For $n$ being a squared integer, we denote with $G$ an integer grid of size $\sqrt{n} \times \sqrt{n}$.
(Otherwise, $G$ is a subset of an integer grid of size $\lceil\sqrt{n}\rceil \times \lceil\sqrt{n}\rceil$,
from which some consecutive points of the topmost row and possibly the leftmost column are removed to have $n$ points remaining.)
An \emph{$\varepsilon$-perturbation} of $G$ is a perturbation of $G$ where every point $p$ of $G$ is mapped to a point at distance at most $\varepsilon$ to $p$.

\begin{defini}
A \emph{squared Horton set} $H$ 
of size $n$ is a specific $\varepsilon$-perturbation 
of $G$ 
such that the following three properties hold.
\begin{enumerate} 
	\item Any triple of non-collinear points in $G$ keeps its orientation in $H$.
	\item 
		The points on any non-vertical line spanned by points of $G$ are perturbed to points forming a Horton set in $H$.
	\item The points on any vertical line spanned by points of $G$ are perturbed to points forming a rotated copy of a Horton set in $H$.
\end{enumerate}
\end{defini}

As already mentioned in the introduction, squared Horton sets have been defined by Valtr~\cite{Va92}. 
A way to construct them is also presented in~\cite{BV2004}. 
For self-containment, we describe a construction similar to the one in~\cite{BV2004} here, for $n$ being a squared integer:

Let $H_x$ be a Horton set of $\sqrt{n}$ points such that the $x$-coordinates are the integers $1,\dots, \sqrt{n}$, and
its $y$-coordinates are in $[-\varepsilon_x,+\varepsilon_x]$ for some arbitrarily small $0 < \varepsilon_x <1/4$. 
This can be accomplished by a suitable linear transformation of a Horton set with points evenly spaced in the $x$-coordinate. Let $H_y$ be a Horton set defined as before, for some $0 <\varepsilon_y < \varepsilon_x$  and rotated $90$ degrees, 
so that the $y$-coordinates of $H_y$ are the integers $1,\dots,\sqrt{n}$ and its $x$-coordinates
are in $[-\varepsilon_y,+\varepsilon_y]$. 

Further, let 
$H:=\{(x_1+x_2,y_1+y_2): (x_1,y_1) \in H_x \textrm{ and } (x_2,y_2) \in H_y \}$
be the Minkowski sum of $H_x$ and $H_y$ and let $G:=\{(i,j): 1 \leq i,j \leq \sqrt{n}\}.$ 
Note that for every point $p$ of $H$ there is a unique point $(i,j) \in G$  at distance at most $\varepsilon:=\varepsilon_x+\varepsilon_y$ of $p$.
Thus, $H$ is an $\varepsilon$-perturbation of $G$. Let $\pert: G \to H$ be the map that sends each such $(i,j) \in G$ to its unique closest $p\in H$.
Observation~\ref{obs:hortonsubsets} implies that if $\varepsilon_x$ is chosen small enough and $\varepsilon_y$ is sufficiently
smaller than $\varepsilon_x$, then $H$ is a squared Horton set since the following conditions hold.
\begin{enumerate}
 \item For every triple $p_i,p_j,p_k$ of non-collinear points of $G$, the orientation
 of $(p_i,p_j,p_k)$ and $({\pert}(p_i),{\pert}(p_j),{\pert}(p_k))$
 is the same.
 \item For every non-vertical straight line $\ell$ that is spanned by points of $G$, ${\pert}(\ell \cap G)$ is a Horton set.
 \item For every vertical straight line $\ell$ that is spanned by points of $G$, ${\pert}(\ell \cap G)$ is a rotated copy of a Horton set.
\end{enumerate}

When reasoning about a squared Horton set $H$ 
we repeatedly reason about structures in $H$ and the according structures in the underlying unperturbed grid $G$ in parallel. 
To relate structures in $G$ with their perturbed structures in $H$, we will denote by $\pert$ the map that is induced by the $\varepsilon$-perturbation that transforms $G$ to $H$.

The following lemma is a direct consequence of the definition of squared Horton sets.
\begin{restatable}{lemma}{sqHortonUnion}\label{lemma:sqHortonUnion}
	Let $H=\pert(G)$ be a squared Horton set 
	and let $\ell$ and $\ell'$ be two parallel lines spanned by $G$. Then $\pert((G\cap \ell) \cup (G \cap \ell'))$ is a (rotated copy of a) Horton set.
\end{restatable}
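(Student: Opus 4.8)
The plan is to reduce the statement to a purely combinatorial claim about order types and then to verify that this order type is that of a Horton set. First I would dispose of the case where $\ell$ and $\ell'$ are vertical: since $H$ is built symmetrically from the two factors $H_x$ and $H_y$ (up to the $90^\circ$ rotation used to form $H_y$), rotating the whole picture turns vertical lines, governed by property~3, into non-vertical ones, governed by property~2. So assume $\ell$ and $\ell'$ are non-vertical and write $U:=\pert\big((G\cap\ell)\cup(G\cap\ell')\big)$. I would then pin down the \emph{entire} order type of $U$ from the three defining properties: any triple of $U$ with members on both lines is already non-collinear in $G$ (two distinct parallel lines share no point), so by property~1 its orientation in $U$ equals its orientation in $G$, while any triple lying on a single line is resolved by property~2. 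Hence the order type of $U$ depends only on the combinatorics of the two lines in the integer grid together with the Horton order type along each line, and is independent of the particular $\epsilon$. It therefore suffices to exhibit, for this fixed order type, the recursive Horton decomposition, allowing (as the statement does) a final rotation or reflection.

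For the decomposition I would try to let the two lines play the role of the two halves at the top level of the recursion. By properties~2 and~3 each of $\pert(G\cap\ell)$ and $\pert(G\cap\ell')$ is itself a (rotated) Horton set, so both candidate halves are already Horton. Because $\ell\parallel\ell'$ and the perturbation is arbitrarily small, every line spanned by two points of the upper line passes above every point of the lower line and vice versa; for $\epsilon$ small enough this yields the required high-above / deep-below relation between the halves (after a reflection, should the leftmost point happen to lie on the upper line). The remaining and genuinely delicate point is to match this two-line decomposition with the even/odd split that the Horton definition prescribes, i.e.\ to make the split alternate between $\ell$ and $\ell'$ in the sorting direction. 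Here Observation~\ref{obs:hortonsubsets} is the key tool: writing $H=H_x\oplus H_y$ as in the construction, the grid point with indices $(i,j)$ maps to $u_i+v_j$ with $u_i\in H_x$, $v_j\in H_y$, so a line of slope $a/b$ corresponds to the diagonal $\{u_{i_0+kb}+v_{j_0+ka}\}_k$, and taking every second point of a line again yields such a diagonal on a sub-line. This should let me argue that the two recursive halves are again unions of at most two parallel sub-lines, closing the argument by induction on $|U|$ (with Lemma~\ref{lem:visible} available to certify that the produced halves are Horton).

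The main obstacle is exactly this matching of the split, and I expect it to split into cases according to how the two lines meet the columns of $G$. When $\ell$ and $\ell'$ occupy interleaved columns (different residues of the $x$-index modulo the lattice spacing of their common direction), the sorted order alternates cleanly and one gets $\{H_0,H_1\}=\{\pert(G\cap\ell),\pert(G\cap\ell')\}$ outright. When they occupy the \emph{same} columns, the points instead form clusters of two whose internal order is decided by the fine perturbation $\xi$ coming from $H_y$, and this order is generally inconsistent from cluster to cluster, so the two lines are \emph{not} the top-level halves and a more careful recursive bookkeeping is needed. Controlling these clusters is precisely where the hierarchy $\epsilon_y\ll\epsilon_x$ is essential: it guarantees that the coarse Horton structure is dictated by $H_x$ while $H_y$ only breaks ties, so that the within-cluster orientations remain consistent with a Horton order type. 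I anticipate that the bulk of the proof is a careful case analysis along these lines rather than a single conceptual step.
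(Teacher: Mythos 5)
Your settled steps are, essentially verbatim, the paper's entire proof. The published argument is five lines long: rotate away the vertical case, set $H_0=\pert(G\cap \ell)$ and $H_1=\pert(G\cap \ell')$, observe that both are Horton sets by properties~2 and~3 of the definition of squared Horton sets, observe that $H_0$ is deep below $H_1$ and $H_1$ is high above $H_0$ (parallelism plus smallness of the perturbation), and conclude from these facts alone that $H_0\cup H_1$ is a rotated Horton set. So on the parts you present as done, you follow exactly the paper's route.

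The issue you single out as ``the genuinely delicate point'' --- that the definition of a Horton set requires the two halves to be the \emph{even/odd} subsets in the sorted order, so one must verify that the points of $\ell$ and $\ell'$ actually alternate in some direction --- is not addressed in the paper at all: the published proof implicitly equates ``admits a partition into two Horton sets, each high above / deep below the other'' with ``is a Horton set''. Your concern is legitimate under the paper's own definition, so do not expect the paper to supply the missing step; your proposal is, if anything, more careful than the published proof, but its hardest part is left as a sketch, and two corrections to that sketch are in order. First, the clean mechanism is not an induction on sub-lines via Observation~\ref{obs:hortonsubsets}, but the freedom to rotate that the statement itself grants: sort along the direction of $\ell$. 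The two lines then project to arithmetic progressions with the same spacing $d$ (Lemma~\ref{lem:integer_lines}); if their offsets differ modulo $d$ the points alternate strictly, independently of the perturbation, while if the offsets coincide every tied pair consists of one point from each line separated by the same perpendicular vector, so one further, arbitrarily small rotation of the sorting direction breaks all ties the same way and again yields alternation (the hierarchy $\epsilon_y\ll\epsilon_x$ is only needed if one insists on sorting exactly along $\ell$). Second, your case analysis will surface a problem with the statement itself: if $|G\cap\ell|$ and $|G\cap\ell'|$ differ by two or more (e.g.\ a long diagonal of $G$ together with a short parallel diagonal near a corner), no ordering of the union alternates between the two lines, so the two lines can never be the top-level halves, and in small examples one can check that no other split satisfies the high-above/deep-below requirements either. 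The lemma therefore really needs either a relaxed notion of Horton set or the restriction --- satisfied everywhere it is applied in the paper, namely to the two boundary lines of a strip --- that the two lines are consecutive lines of their slope, where the point counts differ by at most one. Any complete write-up along your lines should make that restriction (or relaxation) explicit.
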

\begin{proof}
Assume without loss of generality that $\ell$ and $\ell'$ are not vertical and that $\ell$ is below $\ell'$ 
	(otherwise, rotate $H$ accordingly).
Let $H_0=\pert(G\cap \ell)$ and $H_1=\pert(G\cap \ell')$. 
Since $H$ is a squared Horton set, $H_0$ and $H_1$ are both Horton sets.
Furthermore, $H_0$ is deep below $H_1$ since every line passing through two points of $H_0$ is below any point of $H_1$.
Similarly $H_1$ is high above $H_0$.
Therefore, $H_0 \cup H_1$ is a rotated copy of a Horton set.
\end{proof}

A triangle in a squared Horton set $H=\pert(G)$ either corresponds to a triangle in $G$ or to a set of three collinear points in $G$.
In the following, we denote the latter as a \emph{degenerate} triangle. 
Further, for any empty triangle $\pert(\Delta)$ in $H$, $\Delta$ is either degenerate or interior-empty in $G$,
due to the fact that $\pert$ is the map of an $\varepsilon$-perturbation.

Let $\Delta$ be a (possibly degenerate) triangle with vertices in $G$. 
Let $e$ be an edge of $\Delta$ and let $p$ be the vertex of $\Delta$
opposite to $e$. 
We say that the \emph{height of $\Delta$ w.r.t.~$e$} is zero
if $p$ is on the straight line spanned by $e$; otherwise, it is one plus
the number of lines between $e$ and $p$, that are  parallel to $e$, and  that contain
points of the integer grid $\mathbb{Z} \times \mathbb{Z}$. We call the area bounded by two such neighboring lines a \emph{strip}.
The \emph{height} of $\Delta$ is the minimum of the heights w.r.t.\ its edges
and the edge defining the height of $\Delta$ is the \emph{base edge}. 

We review a few basics results regarding line and line segments with points in the integer grid.
\begin{lemma}\label{lem:integer_lines}
 Let $\ell$ be a line containing at least two points $a,b \in \mathbb{Z} \times \mathbb{Z}$.
 Then there exists $d>0$, such that any two consecutive points along  $\ell$ in $\mathbb{Z} \times \mathbb{Z}$
 are at a distance $d$ of each other.
\end{lemma}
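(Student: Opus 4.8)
The plan is to show that the points of $\mathbb{Z}\times\mathbb{Z}$ lying on $\ell$ form an arithmetic progression, so that their common spacing $d$ is exactly the claimed distance. First I would fix the two given lattice points $a,b\in\ell$ and form the integer vector $w:=b-a=(w_1,w_2)\neq(0,0)$. Setting $g:=\gcd(w_1,w_2)$ and $v:=w/g=(w_1/g,w_2/g)$ produces a \emph{primitive} integer direction vector of $\ell$, that is, one whose two coordinates are coprime. Every point $a+kv$ with $k\in\mathbb{Z}$ is then a lattice point on $\ell$, and I would define $d:=|v|=\sqrt{(w_1/g)^2+(w_2/g)^2}>0$.

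The crux is the converse: that every lattice point on $\ell$ has the form $a+kv$ for some integer $k$. Given any $c\in(\mathbb{Z}\times\mathbb{Z})\cap\ell$, the vector $c-a$ is parallel to $v$, so $c-a=tv$ for some real $t$, and the task is to argue that $t\in\mathbb{Z}$. Writing $c-a=(m_1,m_2)$ with integer entries and invoking the coprimality of the coordinates of $v$, a short $\gcd$ argument forces $t$ to be an integer: if a reduced fraction $t=p/q$ had $q>1$, then $q$ would have to divide both coordinates of $v$, contradicting that they are coprime (the degenerate cases where $v$ is horizontal or vertical are immediate). This identifies $(\mathbb{Z}\times\mathbb{Z})\cap\ell$ exactly with the set $\{a+kv:k\in\mathbb{Z}\}$.

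Finally, since the lattice points on $\ell$ are precisely the $a+kv$, ordering them along $\ell$ corresponds to ordering the integers $k$, and two consecutive points differ by $\pm v$. Hence any two consecutive points of $\mathbb{Z}\times\mathbb{Z}$ on $\ell$ are exactly $|v|=d$ apart, as desired. I expect the main obstacle to be the clean handling of the coprimality step, namely making the reduced-fraction argument rigorous while separately dispatching the axis-parallel direction vectors; everything else follows directly from the definitions.
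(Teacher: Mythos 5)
Your proof is correct and takes essentially the same route as the paper's: both arguments reduce to the primitive lattice vector in the direction of $\ell$ (the paper via the slope $r/s$ in lowest terms after translating $a$ to the origin, you via $(b-a)/\gcd$) and use coprimality to conclude that the lattice points on $\ell$ form an arithmetic progression with that step, giving $d=\sqrt{s^2+r^2}=|v|$. The only difference is cosmetic: your vector formulation absorbs the vertical case into the same argument, whereas the paper dispatches vertical lines separately with $d=1$.
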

\begin{proof}
 Suppose that $\ell$ is not vertical, as otherwise the result holds with $d=1$.
 Since $a,b$ are points of $\ell$, the slope $m$ of $\ell$ is a rational number. 
  Let $\ell'$ be the translation
 of $\ell$ by the vector $-a$, so that the point $a$ in $\ell$ is translated to the origin.
 Let $r/s:=m$, with $r,s$ relative prime integers. Note that $\ell'$ has equation
 $y=\frac{r}{s} x$. Thus,   for $(x,y) \in \ell'$, we have that $(x,y) \in \mathbb{Z} \times \mathbb{Z}$ if
 and only if $x$ is an integer multiple of $s$. In this case $y$ is an integer multiple of $r$. Thus, the distance between
 any two consecutive points along $\ell'$ with integer coordinates is equal to
 \[d=\sqrt{s^2+r^2}.\]
 Since $\ell'$ is a translation of $\ell$ by a vector
 in $\mathbb{Z} \times \mathbb{Z}$ , every pair of consecutive points, along $\ell$, of $\ell \cap  \mathbb{Z} \times \mathbb{Z}$ 
 are at a distance $d$ of each other. 
\end{proof}

\begin{corollary}\label{cor:integer_segment}
Let $a,b \in \mathbb{Z} \times \mathbb{Z}$. Let $\ell$ be a line
 parallel to $ab$ and containing a point of  $\mathbb{Z} \times \mathbb{Z}$.
 Then every line segment, $e$, contained in $\ell$, of
 length at least $|ab|$ contains at least one point of $\mathbb{Z} \times \mathbb{Z}$.
 Moreover, if $e$ has an endpoint in $\mathbb{Z} \times \mathbb{Z}$, the $e$
 contains at least two points of $\mathbb{Z} \times \mathbb{Z}$.
\end{corollary}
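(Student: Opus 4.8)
The plan is to derive everything from the uniform spacing established in Lemma~\ref{lem:integer_lines}. First I would apply that lemma to the line through $a$ and $b$: it produces a value $d>0$ such that consecutive points of $\mathbb{Z}\times\mathbb{Z}$ along this line are at distance exactly $d$. Since $a$ and $b$ are themselves two such integer points, $|ab|$ is a positive integer multiple of $d$, and in particular $|ab|\ge d$. This single inequality is what makes the length hypothesis ``at least $|ab|$'' usable, so I would record it at the very start.

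Next I would transfer the spacing to the line $\ell$. The delicate point is that $\ell$ is only assumed to contain one point of $\mathbb{Z}\times\mathbb{Z}$, whereas Lemma~\ref{lem:integer_lines} needs two. To bridge this, I would use that $\ell$ is parallel to $ab$ and hence shares its primitive integer direction vector $v$ (namely $v=(s,r)$ when the common slope equals $r/s$ in lowest terms, and $v=(0,1)$ in the vertical case). Writing $p$ for the given integer point on $\ell$, the translates $p+kv$ for $k\in\mathbb{Z}$ are again integer points lying on $\ell$, so $\ell$ in fact carries infinitely many integer points; Lemma~\ref{lem:integer_lines} then applies to $\ell$ itself and shows that its consecutive integer points are spaced at the same distance $d=|v|$.

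With $\ell$ partitioned by its integer points into consecutive closed intervals of length $d$, both claims become interval-length comparisons. For the first claim, suppose a segment $e\subseteq\ell$ contained no point of $\mathbb{Z}\times\mathbb{Z}$; then $e$ would lie strictly inside one open gap between consecutive integer points, forcing $|e|<d\le|ab|$, contrary to the hypothesis $|e|\ge|ab|$. Hence $e$ meets $\mathbb{Z}\times\mathbb{Z}$ in at least one point. For the ``moreover'' part, let $q$ be an endpoint of $e$ lying in $\mathbb{Z}\times\mathbb{Z}$ and let $q'$ be the next integer point along $\ell$ in the direction of $e$; then $q'$ is at distance $d\le|ab|\le|e|$ from $q$, so $q'\in e$, and $e$ contains the two integer points $q$ and $q'$.

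The only real obstacle is the bookkeeping in the second step, namely promoting the single assumed integer point on $\ell$ to a full $d$-spaced family so that Lemma~\ref{lem:integer_lines} can be invoked on $\ell$. Once that is in place, the two conclusions are immediate from comparing the length $|e|\ge|ab|\ge d$ against the gap length $d$.
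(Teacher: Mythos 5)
Your proof is correct and takes essentially the same route as the paper: both invoke Lemma~\ref{lem:integer_lines} and overcome the fact that $\ell$ is only assumed to contain one integer point by an integer translation (the paper translates the segment $ab$ to a segment $a'b'\subset\ell$ with $b'\in\mathbb{Z}\times\mathbb{Z}$, you translate by the primitive direction vector $v$), then conclude by comparing $|e|\ge|ab|\ge d$ against the uniform spacing $d$. The only difference is expository: the paper ends with ``the result follows'', while you spell out the gap argument and the ``moreover'' part explicitly.
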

\begin{proof}
 Let $a'$ be a point $\ell \cap  \mathbb{Z} \times \mathbb{Z}$. 
 Let $a'b'$ be a line segment parallel to $ab$. Note that $b' \in  \mathbb{Z} \times \mathbb{Z}$.
 Let $d$ be as in Lemma~\ref{lem:integer_lines} for $\ell$.
 Note that $e \ge |ab| =|a'b'| \ge d$. Since every pair of consecutive points, along $\ell$, of $\ell \cap  \mathbb{Z} \times \mathbb{Z}$ 
 are at a distance $d$ of each other, the result follows.
\end{proof}

\begin{lemma}\label{lem:cutting_lines}
 Let $m$ be a rational number. Let $L$ be the set of lines with slope $m$ that pass through some point
 of $\mathbb{Z} \times \mathbb{Z}$. Let $B$ be the intersection points of the lines in $L$ and
 the $x$-axis. Then there exists $d >0$, such that every two points in $B$, that are consecutive along
 the $x$-axis, are at a distance $d$ of each other.
\end{lemma}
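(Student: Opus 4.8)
The plan is to compute the set $B$ explicitly and observe that it is an evenly spaced arithmetic progression. Write $m=r/s$ in lowest terms with $s>0$, and assume $m\neq 0$ (if $m=0$ the lines of $L$ are horizontal and meet the $x$-axis only in the degenerate case where the line \emph{is} the $x$-axis, so after rotating the coordinate frame there is nothing to prove). The line of slope $m$ through an integer point $(a,b)$ has equation $y=m(x-a)+b$, so it crosses the $x$-axis at $x=a-b/m=a-bs/r$. Hence, viewing $B$ as a set of real numbers along the $x$-axis, $B=\{\,a-bs/r : a,b\in\mathbb{Z}\,\}$.

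Next I would identify this set with a one-dimensional lattice. Clearing denominators gives $a-bs/r=(ar-bs)/r$, and as $(a,b)$ ranges over $\mathbb{Z}\times\mathbb{Z}$ the integer $ar-bs$ ranges over all of $\gcd(r,s)\,\mathbb{Z}=\mathbb{Z}$, by B\'ezout's identity together with $\gcd(r,s)=1$ (surjectivity: given $k$, pick $u,v$ with $ur+vs=1$ and take $a=ku$, $b=-kv$). Therefore $B=\tfrac{1}{|r|}\mathbb{Z}$, an arithmetic progression, and any two consecutive points of $B$ along the $x$-axis are at distance $d:=1/|r|>0$, as claimed. Note this also subsumes the fact that the lines of $L$ are pairwise parallel or equal, since a line of slope $m$ is determined by its $x$-intercept.

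The argument is essentially a single B\'ezout computation, so there is no serious obstacle; the only points that need care are the bookkeeping of which quantity is the ``distance'' in the statement (it is the spacing of the intercepts, not the spacing of integer points \emph{along} one line as in Lemma~\ref{lem:integer_lines}) and the explicit handling of the degenerate horizontal slope $m=0$ (and the implicitly excluded vertical family, i.e.\ $s=0$). One could alternatively note that $B$ is a subgroup of $(\mathbb{R},+)$ — it contains $0$ and is closed under addition and negation — and then invoke the classification of subgroups of $\mathbb{R}$ to conclude it is discrete and hence evenly spaced; but pinning down the generator still requires the same gcd computation, so the direct route above is the cleanest.
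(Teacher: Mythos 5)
Your proof is correct, but it follows a more explicit route than the paper's. The paper writes each line as $y=mx+b$ and observes that the set of intercept values is the image of the group homomorphism $(x,y)\mapsto y-mx$ from $\mathbb{Z}\times\mathbb{Z}$ to $\mathbb{Q}$; since $\mathbb{Z}\times\mathbb{Z}$ is finitely generated, the image is a finitely generated subgroup of $\mathbb{Q}$, hence cyclic, i.e.\ of the form $\{nd: n\in\mathbb{Z}\}$, and even spacing follows with no case analysis and no computation of the generator. (Strictly speaking the paper works with the $y$-intercepts $b$ rather than the $x$-intercepts demanded by the statement; for $m\neq 0$ the two sets differ by the nonzero scaling $-1/m$, so the conclusion transfers.) You instead compute the $x$-intercepts directly and identify them, via B\'ezout's identity applied to $m=r/s$ in lowest terms, as the lattice $\tfrac{1}{|r|}\mathbb{Z}$. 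What your approach buys is an explicit spacing $d=1/|r|$ and a proof that is literally about the $x$-axis as in the statement; what it costs is the separate handling of $m=0$ (forced because you divide by $m$), a degenerate case the paper's homomorphism argument absorbs silently. One caution about your closing remark: merely invoking the classification of subgroups of $(\mathbb{R},+)$ does \emph{not} yield discreteness, since a subgroup of $\mathbb{R}$ is either cyclic or dense, and ruling out density is precisely what your gcd computation (equivalently, the paper's finite-generation argument) accomplishes; so that alternative is not a shortcut but the same argument in disguise, exactly as you suspected.
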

\begin{proof}
 The lines in $L$ have equation of the form $y=mx+b$.
 Therefore, 
 \[B = \{b: b=y-mx \textrm{ for some } (x,y) \in \mathbb{Z} \times \mathbb{Z}\}.\]
  Thus, $B$ is the image of the group homomorphism from $\mathbb{Z} \times \mathbb{Z}$ to $\mathbb{Q}$
 that maps $(x,y)$ to $y-mx$. Since  $\mathbb{Z} \times \mathbb{Z}$ is finitely generated, $B$ is also finitely generated.
 As every finitely generated subgroup of $\mathbb{Q}$ is cyclic, there exists a rational number $d$ such that
 $B=\{nd: n \in \mathbb{Z}\}.$
 The result follows.
\end{proof}

\begin{restatable}{lemma}{triangleHeight}\label{lemma:triangle_height}
Any interior-empty triangle of $G$ has height at most $2$.
\end{restatable}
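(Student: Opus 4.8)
The plan is to reduce everything to two facts about a (non-degenerate) lattice triangle $\Delta$: a formula for its height in terms of its edge lengths measured in lattice steps, and a lower bound on its lattice‑point count coming from emptiness. For an edge $e$ of $\Delta$ with endpoints in $\mathbb{Z}\times\mathbb{Z}$, write $g_e$ for the number of primitive segments into which the lattice points subdivide $e$ (equivalently $g_e=\gcd$ of the coordinate differences of its endpoints), so that $e$ carries exactly $g_e+1$ lattice points. The first claim I would establish is that the \emph{height of $\Delta$ with respect to $e$ equals $2\,\mathrm{Area}(\Delta)/g_e$}. To see this, note that the lines parallel to $e$ through points of $\mathbb{Z}\times\mathbb{Z}$ are equally spaced (this is exactly the content of Lemma~\ref{lem:cutting_lines}, applied in the direction of $e$), and they are indexed by the values of the linear form $L(x,y)=qx-py$, where $(p,q)=\vec{e}$; on the lattice $L$ takes precisely the values $g_e\mathbb{Z}$. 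The base line $e$ sits at some value of $L$, and the opposite vertex $c$ — itself a lattice point — sits at a value differing by $|L(c)-L(a)| = 2\,\mathrm{Area}(\Delta)$, a multiple $m\cdot g_e$ of $g_e$. Hence there are exactly $m-1$ parallel lattice lines strictly between $e$ and $c$, giving height $1+(m-1)=m=2\,\mathrm{Area}(\Delta)/g_e$, an integer. Taking the minimum over the three edges, the overall height of $\Delta$ is $2\,\mathrm{Area}(\Delta)/\max_e g_e$, the minimum being attained at the edge carrying the most lattice points.

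The second ingredient is that interior-emptiness bounds the area from above in terms of the $g_e$. Each segment between two lattice points carries $g_e+1$ of them, so summing over the three edges and accounting for the three vertices, the number of boundary lattice points of $\Delta$ is $B=g_{e_1}+g_{e_2}+g_{e_3}$. Since $\Delta$ is interior-empty, Pick's theorem gives $\mathrm{Area}(\Delta)=B/2-1$, i.e.\ $2\,\mathrm{Area}(\Delta)=g_{e_1}+g_{e_2}+g_{e_3}-2$. Combining this with the height formula and ordering $g_{e_1}\le g_{e_2}\le g_{e_3}$, the height of $\Delta$ equals $(g_{e_1}+g_{e_2}+g_{e_3}-2)/g_{e_3}$. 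Because $g_{e_1},g_{e_2}\le g_{e_3}$, the numerator is at most $3g_{e_3}-2<3g_{e_3}$, so the height is strictly less than $3$; being a nonnegative integer, it is at most $2$. (Degenerate triangles have height $0$, so they need no argument.)

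The main obstacle is getting the height formula exactly right, in particular the off‑by‑one bookkeeping hidden in ``strictly between'' and the fact that the apex, being a lattice point, lies \emph{on} a parallel lattice line rather than between two of them; the integrality of $2\,\mathrm{Area}(\Delta)/g_e$ is what ultimately forces height~$\le 2$ rather than just $<3$. A secondary issue is the appeal to Pick's theorem, which is not developed in the excerpt. If one prefers to avoid it and use only the lattice tools already proved, I would instead argue by contradiction: assuming the height is $\ge 3$, take $e$ to be the edge with the most lattice points (so that its height $m\ge 3$ is the minimum), and look at the first parallel lattice line $\ell_1$ one strip above $e$. By similar triangles its chord inside $\Delta$ has length $|e|\,(m-1)/m\ge \tfrac{2}{3}|e|$, while consecutive lattice points on $\ell_1$ are spaced $|e|/g_e$ apart; taking two consecutive lattice points of $e$ as the reference pair in Corollary~\ref{cor:integer_segment} then produces a lattice point strictly inside $\Delta$ whenever $g_e\ge 2$, contradicting emptiness. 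The annoying residual case is $g_e=1$ (all edges primitive), where the chord is too short; here one must separately observe that a primitive empty triangle is unimodular and has height $1$. This case split is precisely what the Pick‑theorem argument sidesteps, which is why I would present the Pick version as the clean route.
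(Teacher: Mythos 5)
Your proposal is correct, but it takes a genuinely different route from the paper's proof. The paper argues by contradiction with explicit lattice-geometric constructions: it first shows that if two edges of $\Delta$ each contain at least two interior grid points, then $\Delta$ has a grid point inside (a midpoint argument based on Lemma~\ref{lem:integer_lines}), then assumes height at least $3$ and uses Corollary~\ref{cor:integer_segment} and Lemma~\ref{lem:cutting_lines} to chase a lattice point on the first parallel line $\ell_1$ into the interior of $\Delta$, with a case analysis on whether that point lies strictly inside the chord or coincides with an endpoint on another edge --- in spirit this is the fallback argument you sketch at the end, including the case bookkeeping you flag as annoying, though the paper resolves the residual cases by iterating the argument on a second edge rather than by invoking unimodularity. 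Your main route instead packages everything into two exact identities: the height of $\Delta$ with respect to an edge $e$ equals $2\,\mathrm{Area}(\Delta)/g_e$ (your derivation is sound: the lattice lines parallel to $e$ are precisely the level sets $L\in g_e\mathbb{Z}$ of your linear form, and $|L(c)-L(a)|=2\,\mathrm{Area}(\Delta)$ is the cross-product formula), and Pick's theorem with zero interior points, giving height $=(g_{e_1}+g_{e_2}+g_{e_3}-2)/g_{e_3}<3$, hence at most $2$ by integrality. What your version buys is a shorter, essentially case-free proof and a stronger conclusion (an exact formula for the height); what it costs is the appeal to Pick's theorem, which the paper does not develop and apparently wants to avoid in favor of its self-contained lemmas, though Pick's theorem is classical and its use here is unobjectionable. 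One point worth making explicit in your write-up: interior-emptiness in the lemma is with respect to $G$, not all of $\mathbb{Z}\times\mathbb{Z}$, so to apply Pick with $I=0$ you should note that a triangle with vertices in $G$ lies in $\Conv(G)$, whose lattice points are exactly the points of $G$; the paper's proof needs (and implicitly uses) the same observation when it asserts its constructed lattice points lie in $G$.
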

\begin{proof}
Let $\Delta$ be a triangle with vertices in $G$. We first show that

\LabelQuote{if two edges of $\Delta$ have each at least two interior points in $G$, then
there is a point of $G$ in the interior of $\Delta$.}{$\ast$}

\begin{figure}
\centering
\includegraphics[page=4]{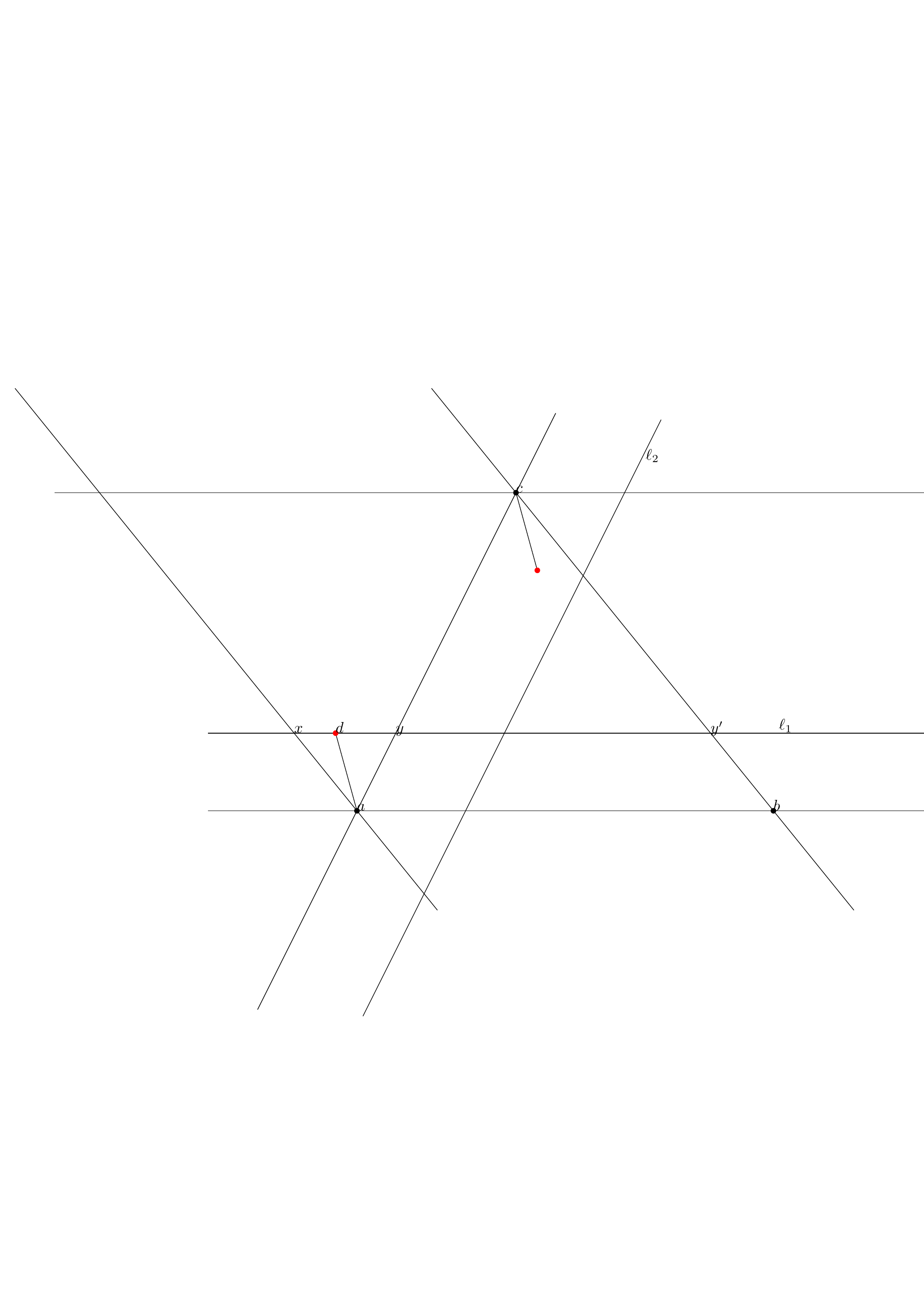}
\caption{A triangle $\Delta$ with two edges containing two interior points of $G$ each and the induced point inside $\Delta$.}
\label{fig:height2:case3}
\end{figure}

Let $e_1$ and $e_2$ be two edges of $\Delta$, each with at least two interior points
in~$G$. Let $a$ be the vertex of $\Delta$ common to $e_1$ and $e_2$. Let $x_1$ and $x_2$ be
the points closest and second closest to $a$ in $e_1 \cap \mathbb{Z} \times \mathbb{Z}$, respectively.
Let $y_1$ and $y_2$ be
the points closest and second closest to $a$ in $e_2 \cap \mathbb{Z} \times \mathbb{Z}$, respectively. See in Figure~\ref{fig:height2:case3}.
By Lemma~\ref{lem:integer_lines}, there exist $d_1,d_2 >0$ such that 
$|ax_1|=|x_1x_2|=d_1$ and $|ay_1|=|y_1y_2|=d_2$. This implies
that $x_2y_2$ is parallel to and twice the length of $x_1y_1$. Therefore,
the midpoint of $x_2y_2$ is in $G$, which proves ($\ast$).

Now assume for the contrary that $\Delta$ is interior-empty and has height at least 3.
Let $a,b$ and $c$, be the vertices of $\Delta$, and let $e:=ab$.
Since $\Delta$ has height at least 3,
 there exist at least two lines parallel to $e$ each containing a point
of $\mathbb{Z} \times \mathbb{Z}$ and crossing through the interior of $\Delta$.
 Of these lines let $\ell_1$ and $\ell_2$ be the lines
closest and second closest to $e$, respectively. Let $\ell'$ be the line parallel to $bc$ and containing $a$. 
Let $x$ be the point of intersection of $\ell'$ and $\ell_1$;
let $y$ be the point of intersection of $ac$ and $\ell_1$; and
let $y'$ be the point of intersection between $bc$ an $\ell_1$. See Figure~\ref{fig:height2:case2}.
Since $ab$ is parallel to $xy'$ and they have the same length, by Corollary~\ref{cor:integer_segment} there exists a point $p \in \mathbb{Z} \times \mathbb{Z}$ on $xy'$.
We may assume that $yy'$ does not contain
points of $G$ in its interior as otherwise $\Delta$ is not interior-empty.
Therefore, $p$ is either on the line segment $xy$ or $p=y'$.

\begin{figure}
\centering
\includegraphics[page=3]{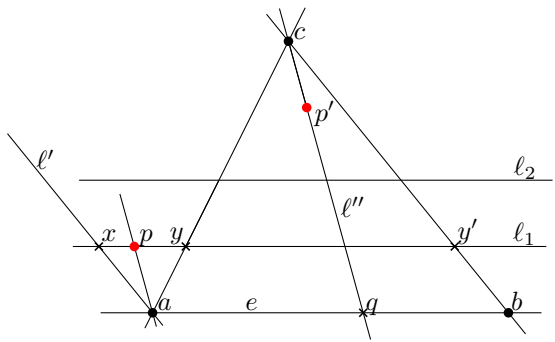}
\caption{If a point $p \in G$ is in the interior of $xy$ then the triangle $abc$ is not empty. (Points of $G$ are marked with disks, points which are not neccessarily in $G$ are marked with crosses.)}
\label{fig:height2:case2}
\end{figure}

Suppose next that $p$ is in the interior of $xy$. Let $\ell''$ be the line
parallel to $pa$ and containing $c$. Let $q$ be the point of intersection
between $\ell''$ and $ab$, as depicted in Figure~\ref{fig:height2:case2}. Note that $|pa|<|cq|$. By Corollary~\ref{cor:integer_segment},
$cq$ contains a point of $G$ in its interior, and $\Delta$ is not interior-empty.

If $p=x$, then $y'$ has integer coordinates and hence is a point of $G$. 
So it remains to consider $p \in \{y, y'\}$.
Suppose that $p=y$ or $y'$. Let $e'$ be the side of $\Delta$ that contains $p$,
and let $q$ be the endpoint of $e'$ distinct from $c$.
Let $p'$ be the intersection point of $e'$ and $\ell_2$. 
By Lemma~\ref{lem:cutting_lines}, we have
that $|qp|=|pp'|$. Since $p$ has integer coordinates, then so does $p'$, and $e'$ contains
two interior points in $G$.
We repeat the previous arguments now with $e=e'$ to conclude that either $\Delta$ is interior non-empty
or $\Delta$ has an edge distinct from $e'$ with two interior points in $G$. In the latter case, we are done
by ($\ast$).
\end{proof}

For the proof of our next statement, we use the Euler's totient function $\phi$.
For a given integer $d$, $\phi(d)$ is the number of integers at most $d$ that are relative primes with $d$. 
Clearly,  $\phi(d) \leq d$. A segment with endpoints in the integer grid is \emph{primitive} if it does not contain
any integer grid point in its integior.
It is well-known that  for $d>1$:
\begin{itemize}
\item  $2 \cdot \phi(d)$ is the number of points $(d,a)$ with $|a|<|d|$ on the integer grid such that the segment from the origin to the point $(d,a)$ is primitive; and 
\item $2 \cdot \phi(d)$ is the number of points $(a,d)$ with $|a|<|d|$ on the integer grid such that the segment from the origin to the point $(a,d)$ is primitive.
\end{itemize}
We use the following lemma to get asymptotic bounds later on.
\begin{lemma}\label{lemma:calc_horton_stabs}
Let $n\geq 1$ be the square of an integer. Then
 \begin{eqnarray*}
 \sum_{d=1}^{ \sqrt{n} } \phi(d) ( \sqrt{n} /d) \cdot \log_2( \sqrt{n} /d) & = & O(n).
 \end{eqnarray*}
\end{lemma}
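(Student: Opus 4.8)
The plan is to eliminate the arithmetic function $\phi$ from the sum by the trivial bound $\phi(d)\le d$, which is already recorded in the excerpt, and then reduce everything to a sum of logarithms. First I would set $m:=\sqrt{n}$, so the claim becomes $\sum_{d=1}^{m}\phi(d)\,(m/d)\log_2(m/d)=O(m^2)$. Applying $\phi(d)\le d$ cancels the factor $1/d$ entirely, giving
\[
\sum_{d=1}^{m}\phi(d)\,\frac{m}{d}\,\log_2\!\Big(\frac{m}{d}\Big)\ \le\ \sum_{d=1}^{m} m\,\log_2\!\Big(\frac{m}{d}\Big)\ =\ m\sum_{d=1}^{m}\log_2\!\Big(\frac{m}{d}\Big).
\]
Thus everything reduces to showing that the remaining sum of logarithms is $O(m)$.

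For the second step I would estimate $\sum_{d=1}^{m}\log_2(m/d)$. Writing it as $m\log_2 m-\log_2(m!)=\log_2(m^m/m!)$, Stirling's formula gives $m^m/m!=\Theta(e^m/\sqrt{m})$, so the sum equals $m\log_2 e-\tfrac12\log_2(2\pi m)+o(1)=O(m)$. Alternatively, and avoiding Stirling, I would compare the sum to an integral: since the map $x\mapsto\log_2(m/x)$ is positive and decreasing on $(0,m]$, we have $\sum_{d=1}^{m}\log_2(m/d)\le\int_{0}^{m}\log_2(m/x)\,dx$, and the substitution $t=x/m$ turns this into $m\int_0^1(-\log_2 t)\,dt=m/\ln 2=O(m)$. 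Either route yields $\sum_{d=1}^{m}\log_2(m/d)=O(m)$.

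Combining the two steps, the original sum is at most $m\cdot O(m)=O(m^2)=O(n)$, as required. I expect no real obstacle here: the argument is a short calculation. The only point requiring a little care is the behaviour near $x=0$ in the integral comparison (equivalently, the small-$d$ terms of the sum), but $\log_2(m/x)$ is integrable at the origin, so the comparison is valid. The essential insight is simply that the crude bound $\phi(d)\le d$ exactly neutralizes the $1/d$ weight, leaving a logarithmic average whose boundedness collapses the whole estimate to $O(n)$.
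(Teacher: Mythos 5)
Your proposal is correct and follows essentially the same route as the paper: apply the trivial bound $\phi(d)\le d$ to cancel the $1/d$ factor, rewrite the resulting sum of logarithms as $\log_2\bigl(m^m/m!\bigr)$, and invoke Stirling's bound $m!\ge (m/e)^m$ to conclude the sum is $O(m)=O(\sqrt{n})$, giving $O(n)$ overall. Your alternative integral-comparison argument is a valid (and Stirling-free) variant of the same second step, but the core decomposition is identical to the paper's.
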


\begin{proof}
We use the  bounds $\phi(d) \le d$ and $n!\geq (n/e)^n$. The latter follows by  Stirling's formula. 
 \begin{eqnarray*}
 \sum_{d=1}^{ \sqrt{n} } \phi(d) \cdot ( \sqrt{n} /d ) \cdot \log_2( \sqrt{n}/d )
 & \le  & \sum_{d=1}^{ \sqrt{n} } \sqrt{n} \cdot \log_2( \sqrt{n}/d ) \\
 & = & \sqrt{n} \cdot \log_2 \left( \prod_{d=1}^{\sqrt{n}} \sqrt{n}/d  \right) \\
 & = & \sqrt{n} \cdot \log_2  \left( (\sqrt{n})^{\sqrt{n}} / \sqrt{n}! \right) \\
	 & \le &  \sqrt{n} \cdot \log_2  \left( \sqrt{n}^{\sqrt{n}} / (\sqrt{n}/e)^{\sqrt{n}} \right) \\
	 & = & \sqrt{n} \cdot\log_2  \left( e^{ \sqrt{n} } \right) \\
	 & \le & (\sqrt{n})^2 \cdot \log_2(e)  \\
 & = & O(n). 
 \end{eqnarray*}
\end{proof}

\begin{restatable}{theorem}{sqHortonStabs}\label{thm:horton_stabs}
	Let $H=\pert(G)$ be a squared Horton set of $n$ points. Then every point of the plane stabs $O(n)$ empty triangles of $H$. 
\end{restatable}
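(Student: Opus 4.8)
The plan is to fix an arbitrary point $q$ of the plane and bound the number of empty triangles of $H$ that it stabs by grouping those triangles according to the slope of a distinguished edge, and then summing the per-slope contributions using the totient estimate of Lemma~\ref{lemma:calc_horton_stabs}. We may assume $q\in\Conv(H)$, since otherwise $q$ stabs nothing. Every empty triangle of $H$ has the form $\pert(\Delta)$, where $\Delta$ is either degenerate or interior-empty in $G$; in the latter case Lemma~\ref{lemma:triangle_height} guarantees that $\Delta$ has height at most $2$. To each empty triangle I would assign its \emph{base edge} (the height-defining edge, with ties broken arbitrarily) and hence a slope $m$, and then bound, separately for each slope, the number of stabbed empty triangles with that base slope.

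Next I would fix a slope $m=r/s$ with $\gcd(r,s)=1$ and set $t=\max(|r|,|s|)$. By Lemma~\ref{lem:cutting_lines} the lines of slope $m$ through points of $\mathbb{Z}\times\mathbb{Z}$ are equally spaced and partition the plane into strips, and by Lemma~\ref{lem:integer_lines} each such line meets $G$ in $O(\sqrt{n}/t)$ points. Consider first a non-degenerate empty triangle $\pert(\Delta)$ with base slope $m$: its base edge lies on a line $\ell$ and its opposite vertex lies on a parallel line $\ell'$ that is at most two strips away. Since $q$ lies in the interior of $\pert(\Delta)$, it lies between $\ell$ and $\ell'$, hence within two strips of $\ell$; consequently only $O(1)$ pairs $(\ell,\ell')$ are admissible. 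For each such pair, Lemma~\ref{lemma:sqHortonUnion} shows that $H'=\pert((\ell\cap G)\cup(\ell'\cap G))$ is a (rotated) Horton set on $O(\sqrt{n}/t)$ points, and $\pert(\Delta)$ is an empty triangle of $H'\subseteq H$ stabbed by $q$; Lemma~\ref{lemma:hortoninterior} then bounds the number of such triangles per pair by $O\bigl((\sqrt{n}/t)\log(\sqrt{n}/t)\bigr)$. Degenerate triangles I would handle analogously: choosing the perturbation parameter $\varepsilon$ small enough (as a function of $n$) that the $\varepsilon$-neighborhoods of distinct slope-$m$ grid lines are pairwise disjoint, every degenerate empty triangle with base slope $m$ lies near a single line $\ell$, is an empty triangle of the Horton set $\pert(\ell\cap G)$, and $q$ can be near only $O(1)$ such lines. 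Summing over the $O(1)$ relevant Horton sets gives a per-slope contribution of $O\bigl((\sqrt{n}/t)\log(\sqrt{n}/t)\bigr)$.

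It then remains to sum over slopes. Using the stated properties of Euler's totient function, the number of slopes $m=r/s$ with $\max(|r|,|s|)=d$ is $O(\phi(d))$. Hence the total number of empty triangles of $H$ stabbed by $q$ is at most
\[
\sum_{d=1}^{\sqrt{n}} O(\phi(d))\cdot O\!\left(\frac{\sqrt{n}}{d}\log_2\frac{\sqrt{n}}{d}\right)
= O\!\left(\sum_{d=1}^{\sqrt{n}}\phi(d)\,\frac{\sqrt{n}}{d}\log_2\frac{\sqrt{n}}{d}\right)=O(n),
\]
where the last equality is exactly Lemma~\ref{lemma:calc_horton_stabs}.

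The crux of the argument, and the step I expect to require the most care, is the reduction, for each fixed slope, to only $O(1)$ Horton sets of size $O(\sqrt{n}/t)$. This is precisely where the height bound of Lemma~\ref{lemma:triangle_height} (confining the opposite vertex to within two strips) and the equal spacing of Lemma~\ref{lem:cutting_lines} are indispensable, and where the degenerate triangles must be localized to a single line by taking $\varepsilon$ sufficiently small relative to $n$. Once this reduction is established, the totient count of slopes and the summation in Lemma~\ref{lemma:calc_horton_stabs} deliver the bound $O(n)$.
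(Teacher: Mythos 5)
Your overall strategy coincides with the paper's: classify the stabbed empty triangles of $H$ by the slope of the base edge of the corresponding grid triangle (using Lemma~\ref{lemma:triangle_height} to cap the height at $2$), reduce each slope class to $O(1)$ Horton sets of size $O(\sqrt{n}/d)$ via Lemmas~\ref{lemma:sqHortonUnion} and~\ref{lemma:hortoninterior}, and sum over slopes with the totient count and Lemma~\ref{lemma:calc_horton_stabs}. However, there is one genuine gap: every one of your localization steps (``$q$ lies between $\ell$ and $\ell'$, hence only $O(1)$ pairs are admissible'', and the treatment of degenerate triangles) is justified metrically, by ``choosing $\varepsilon$ small enough (as a function of $n$) that the $\varepsilon$-neighborhoods of distinct slope-$m$ grid lines are pairwise disjoint.'' That assumption is not available: the theorem quantifies over all squared Horton sets, and the definition only demands an $\varepsilon$-perturbation satisfying Properties 1--3; nothing bounds $\varepsilon$, or the actual displacements, in terms of $n$. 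Concretely, parallel grid lines of slope $r/s$ are spaced $1/\sqrt{r^2+s^2}$ apart, which is as small as roughly $1/\sqrt{2n}$ for the steepest slopes, while even in the paper's own construction $\varepsilon=\varepsilon_x+\varepsilon_y$ is a constant, so for large $d$ the neighborhoods you need to be disjoint in fact overlap; moreover, Properties 1--3 are translation-invariant, so they cannot force the perturbed points to stay metrically near their lines at all. The repair is exactly the paper's argument: Property 1 (orientation preservation) alone implies that the perturbed images of two parallel grid lines---and more generally of two strips sharing no grid point---have disjoint convex hulls. Since a stabbed triangle $\pert(\Delta)$ lies in $\Conv$ of the perturbed points of its (double) strip, the point $q$ can lie in such hulls for only $O(1)$ strips per slope. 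With that substitution, all of your counts go through unchanged. (As written, your proof does establish the bound for squared Horton sets with sufficiently small $\varepsilon$, which would still suffice for Theorem~\ref{thm:LensSquaredHortonSets}, but not Theorem~\ref{thm:horton_stabs} as stated.)

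One further difference is worth noting, and it is to your credit. You handle heights $1$ and $2$ uniformly, applying Lemma~\ref{lemma:sqHortonUnion} to the pair (base line, apex line) even when these are two strips apart and then invoking Lemma~\ref{lemma:hortoninterior}; this is legitimate, since Lemma~\ref{lemma:sqHortonUnion} does not require the two lines to be adjacent, and emptiness with respect to $H$ is inherited by the subset $\pert((\ell\cap G)\cup(\ell'\cap G))$. The paper instead treats height $2$ by a separate, more elementary argument: for a fixed base line and fixed apex, the interior-empty height-$2$ triangles are pairwise interior-disjoint, so at most one of them is stabbed, giving a per-$d$ count of $O(\phi(d)\cdot\sqrt{n}/d)$ with no logarithmic factor. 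Your route buys uniformity (one argument for all non-degenerate triangles); the paper's avoids the Horton machinery for that case and gives a slightly sharper count there. Both yield the required $O(n)$ once summed via Lemma~\ref{lemma:calc_horton_stabs}.
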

\begin{proof}
  Obviously, no point of $H$ can stab any empty triangle of $H$.
 Consider an arbitrary point $q \in \mathbb{R}^2\setminus H$. 
	Every empty triangle $\pert(\Delta)$ in $H$ corresponds to an interior-empty triangle $\Delta$ in $G$. By Lemma~\ref{lemma:triangle_height}, $\Delta$ has height at most 2.
 We separately count the empty triangles of different heights in $H$ that possibly contain $q$. 
	We consider each such triangle $\pert(\Delta)$ by the slope of the base edge of $\Delta$ in $G$.
 
We start with the triangles of height zero; these triangles correspond to degenerate triangles in $G$.
Let $\Delta$ be a degenerate triangle  in $G$, with slope $m$, such that $\pert(\Delta)$ is stabbed by $q$. Let $\ell$ be  the line
that contains $\Delta$. Let $\ell'$ be a line distinct from $\ell$ and parallel to $\ell$.
Let $\Delta'$ be a degenerate triangle in $G$, contained in $\ell'$. By Property $1$ of the definition
of Squared Horton sets, the convex hulls of $\pert(\Delta)$ and $\pert(\Delta')$ do not intersect.
In particular $\pert(\Delta')$ is not stabbed by $q$.
This implies that for every possible slope, $m$, spanned by points in $G$, there exists at most one line 
containing all degenerate triangles $\Delta$ of $G$, with slope $m$, such that $\pert(\Delta)$ is stabbed by $q$. 

 Suppose that $|m| < 1$. Let $d >0$ be the distance, in $x$-direction, between two consecutive points of $\ell \cap G$.
Note that  $d$ is an integer satisfying $1 \le d \le \sqrt{n}$;  thus,  $|\ell \cap G| \le  \sqrt{n}/d+1$.
Let $(d,a)$ be the vector defined by two consecutive integer grid points in $\ell$. Note that
since $|m| < 1$, we have that $|a| < |d|$. Therefore, $m$ has at most $2 \cdot \phi(d)$ different possible values. 
By a similar argument, if $|m| >1$ we have that:  $|\ell \cap G| \le  \sqrt{n}/d+1$ for 
 some integer  $1 \le d \le \sqrt{n}$; and  $m$ has  at most $2 \cdot \phi(d)$ different possible values.
 Therefore, $m$  has at most $4\cdot\phi(d)$ different possible values. By Properties $2$ and $3$ of the definition
 of Squared Horton sets, $\pert(\ell \cap G)$ forms a Horton set. 
  Hence, by Lemma~\ref{lemma:hortoninterior}, the number of empty triangles in $\pert(\ell \cap G)$ that contain $q$ is bounded by $O(\sqrt{n}/d \cdot \log(\sqrt{n}/d))$. 
  Summing this bound over all possible slopes and applying Lemma~\ref{lemma:calc_horton_stabs}, we obtain an upper bound of 
	\[ \sum_{d=1}^{\sqrt{n}} 4 \cdot \phi(d) \cdot O\big(\sqrt{n}/d \cdot \log(\sqrt{n}/d)\big) 
	= O\Big(\sum_{d=1}^{\sqrt{n}} \phi(d) \cdot \sqrt{n}/d \cdot \log(\sqrt{n}/d)\Big) = O(n)\]
for the number of empty triangles of height zero that can be stabbed by $q$.

 Suppose that $m \in (0^\circ, 45^\circ] \cup  (-90^\circ, -45^\circ]$. Let $d >0$ be the distance, in $x$-direction, between two consecutive points of $\ell \cap G$.
Note that  $d$ is an integer satisfying $1 \le d \le \sqrt{n}$;  thus,  $|\ell \cap G| \le  \sqrt{n}/d+1$.
 Further note that $m$ has at most $2 \cdot \phi(d)$ different possible values. Suppose that 
 $m \in  (-90^\circ, -45^\circ] \cup (45^\circ, 90^\circ]$. By a similar argument (with the roles
 of the $x$ and $y$ directions interchanged) we have that:  $|\ell \cap G| \le  \sqrt{n}/d+1$ for 
 some integer  $1 \le d \le \sqrt{n}$; and  $m$ has  at most $2 \cdot \phi(d)$ different possible values.
 Therefore, $m$  has at most $4\cdot\phi(d)$ different possible values. By Properties $2$ and $3$ of the definition
 of Squared Horton sets, $\pert(\ell \cap G)$ forms a Horton set. 
  Hence, by Lemma~\ref{lemma:hortoninterior}, the number of empty triangles in $\pert(\ell \cap G)$ that contain $q$ is bounded by $O(\sqrt{n}/d \cdot \log(\sqrt{n}/d))$. 
  Summing this bound over all possible slopes and applying Lemma~\ref{lemma:calc_horton_stabs}, we obtain an upper bound of 
	\[ \sum_{d=1}^{\sqrt{n}} 4 \cdot \phi(d) \cdot O\big(\sqrt{n}/d \cdot \log(\sqrt{n}/d)\big) 
	= O\Big(\sum_{d=1}^{\sqrt{n}} \phi(d) \cdot \sqrt{n}/d \cdot \log(\sqrt{n}/d)\Big) = O(n)\]
for the number of empty triangles of height zero that can be stabbed by $q$.

We next bound the number of triangles of height one that contain $q$.
Consider two empty triangles $\Delta$ and $\Delta'$ of height one in $G$ whose base edges are parallel and 
for which $q$ lies in both $\pert(\Delta)$ and $\pert(\Delta')$.
Let $S$ and $S'$ be the points of $G$ that are on the boundary of the parallel strips containing $\Delta$ and $\Delta'$, respectively.
Since the strips are parallel and $q \in \Conv(\pert(S)) \cap \Conv(\pert(S'))$, it follows that $S \cap S' \neq \emptyset$.
In other words, $\Delta$ and $\Delta'$ lie either in the same strip or in two neighboring strips of $G$.
So for each possible slope of the base edge of $\Delta$, there are at most two strips in $G$ which can contain $\Delta$ such that $q$ stabs $\pert(\Delta)$.

Similar as before, for each $1 \le d < \sqrt{n}$ there are at most $4 \cdot \phi(d)$ possible slopes 
for the base edge of $\Delta$ and the according strip $S$ containing $\Delta$ has at most $2(\sqrt{n}/d+1)$ points of $G$.
By Lemma~\ref{lemma:sqHortonUnion}, $\pert(S)$ forms a Horton set in $H$. Hence, by Lemma~\ref{lemma:hortoninterior}, 
the number of empty triangles in $\pert(S)$ that contains $q$ is upper bounded by $O(\sqrt{n}/d \cdot \log(\sqrt{n}/d))$.
Summing up over all possible slopes and the according strips, we obtain an upper bound of 
	\[ \sum_{d=1}^{\sqrt{n}} 4 \cdot \phi(d) \cdot 2 \cdot O\big(\sqrt{n}/d \cdot \log(\sqrt{n}/d)\big) 
	= O\Big(\sum_{d=1}^{\sqrt{n}} \phi(d) \cdot \sqrt{n}/d \cdot \log(\sqrt{n}/d)\Big) = O(n)\]
for the number of empty triangles of height one that can be stabbed by $q$.

Finally, we consider the triangles of height $2$. 
Let $\ell$ be the supporting line of the base edge of a triangle $\Delta$ in $G$ of height $2$ 
such that $\pert(\Delta)$ is empty and contains $q$. 
Let $S$ be the set of points of $G$ in the double strip bounded by $\ell$ that contains $\Delta$ 
and let $p$ be the corner of $\Delta$ that does not lie on its base edge. 
Note that all interior-empty triangles with height 2, base edge on $\ell$, and third corner $p$ are pairwise interior-disjoint. Hence $\Delta$ is the only such triangle for which $\pert(\Delta)$ is empty and contains $q$. 

Now consider a triangle $\Delta'$ of height $2$ for which $\pert(\Delta')$ is empty and contains $q$ and whose base edge is parallel to the one of $\Delta$. 
Let $S'$ be the set of points of $G$ in the double strip parallel to $\ell$ that contains $\Delta'$.
Since the double-strips of $\Delta$ and $\Delta'$ are parallel and $q \in \Conv(\pert(S)) \cap \Conv(\pert(S'))$, it follows that $S \cap S' \neq \emptyset$.
In other words, the two double strips must be identical, overlapping, or neighboring
So for each possible slope of the base edge of $\Delta$, there are at most three double-strips in $G$ which can contain $\Delta$ such that $q$ stabs $\pert(\Delta)$. Hence at most five lines of that slope could contain the third vertex $p$ of $\Delta$. 

For each $1 \le d \le \sqrt{n}$ there are at most $4 \cdot \phi(d)$ possible slopes for the base edge of $\Delta$. 
Each of the at most five lines of such a slope has at most $\sqrt{n}/d+1$ points of $G$, each of which could be $p$.
Summing up over all possible slopes and the according lines and slopes and using $\phi(d)\leq d$, we obtain an upper bound of
	\[ \sum_{d=1}^{\sqrt{n}} 4 \cdot \phi(d) \cdot 5(\sqrt{n}/d + 1) 
	= O\Big(\sqrt{n} \sum_{d=1}^{\sqrt{n}} \phi(d)/d \Big) = O(n)\]
for the number of empty triangles of height $2$ that can be stabbed by $q$.

Adding up the bounds for the three different triangle heights yields an upper bound of $3\cdot O(n) = O(n)$ on the total number of empty triangles in $H$ that contain $q$, which completes the proof. 
\end{proof}
	
The following result on the number of empty triangles incident to a fixed point of a squared Horton set
is proven in a similar way as Theorem~\ref{thm:horton_stabs}.

\begin{restatable}{lemma}{sqHortonIncident}\label{lemma:sqHorton_incident}
Every point of a squared Horton set of $n$ points is incident to $O(n)$ empty triangles.
\end{restatable}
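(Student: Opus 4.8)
The plan is to mirror the proof of Theorem~\ref{thm:horton_stabs}, replacing the stabbing point $q$ by a fixed vertex $r = \pert(v)$ of the squared Horton set and counting empty triangles incident to $r$ rather than containing $q$. As before, every empty triangle $\pert(\Delta)$ in $H$ corresponds to an interior-empty triangle $\Delta$ in $G$, which by Lemma~\ref{lemma:triangle_height} has height at most~$2$, so I would split the count according to the height of $\Delta$ and, within each height, according to the slope of the base edge of $\Delta$. The key structural input stays the same: for a fixed slope with parameter $d$ (where $d$ is the $x$-distance between consecutive grid points on a line of that slope), there are at most $4\cdot\phi(d)$ admissible slopes, each relevant line meets $G$ in at most $\sqrt{n}/d+1$ points, and by Properties~2 and~3 of squared Horton sets together with Lemma~\ref{lemma:sqHortonUnion}, the perturbed image of the relevant strip forms a (rotated copy of a) Horton set.

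The crucial change is that the inner bound now comes from Lemma~\ref{lemma:hortonincident} instead of Lemma~\ref{lemma:hortoninterior}: in a Horton set of size $s$, a fixed point is incident to only $O(\log s)$ empty triangles, rather than $O(s \log s)$. First I would treat the degenerate (height-zero) triangles: the vertex $r$ lies on exactly one line $\ell$ of each admissible slope through $v$, namely the line spanned by that slope and $v$, and $\pert(\ell\cap G)$ is a Horton set containing $r$, so Lemma~\ref{lemma:hortonincident} contributes $O(\log(\sqrt{n}/d))$ per slope. For heights one and two I would argue, as in the theorem, that once the base-edge slope is fixed there are only a constant number of strips (respectively double-strips) whose perturbed point set can contain an empty triangle incident to~$r$ — here the constraint is simply that $r$ must be one of the (at most three) vertices of $\Delta$, hence $v$ must lie in or adjacent to the strip in question. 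On each such strip $\pert(S)$ is a Horton set of size $O(\sqrt{n}/d)$ containing $r$, and Lemma~\ref{lemma:hortonincident} again gives $O(\log(\sqrt{n}/d))$ incident empty triangles.

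Summing over slopes then gives, for each height, a bound of the shape
\[
\sum_{d=1}^{\sqrt{n}} O(\phi(d))\cdot O\!\big(\log(\sqrt{n}/d)\big).
\]
Since $\phi(d)\le d$, this is dominated by $\sum_{d=1}^{\sqrt{n}} d\log(\sqrt{n}/d) = O(n)$, which one can see directly (the terms with $d=\Theta(\sqrt n)$ already sum to $\Theta(n)$ and the logarithmic factor only helps for small $d$); alternatively it follows a fortiori from the bound $\sum_d \phi(d)(\sqrt{n}/d)\log(\sqrt{n}/d)=O(n)$ of Lemma~\ref{lemma:calc_horton_stabs}, since each summand here is smaller by the factor $\sqrt{n}/d\ge 1$. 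Adding the three height contributions yields $O(n)$ empty triangles incident to~$r$.

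The main obstacle, and the only place where genuine care is needed beyond transcribing the theorem's proof, is the bookkeeping that confines the count to a constant number of strips per slope. For the stabbing statement this used the convexity argument $q\in\Conv(\pert(S))\cap\Conv(\pert(S'))\Rightarrow S\cap S'\neq\emptyset$; here the corresponding restriction is geometric about the vertex $v$ rather than about a separate point $q$, and I must verify that $v$ lying on the boundary of or inside the relevant strip forces the strip to be one of a bounded list. I expect this to go through cleanly because $r$ is a vertex of $\Delta$, so $v$ is a grid point bounding the strip and hence determines it up to the same neighboring-strip ambiguity; but the height-two case, where $r$ could be the apex $p$ off the base edge rather than a base vertex, warrants an explicit check that the number of candidate base-edge lines remains $O(1)$ per slope.
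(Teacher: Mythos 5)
Your overall plan is the same as the paper's: split by height $0,1,2$ via Lemma~\ref{lemma:triangle_height}, restrict for each slope (parametrized by $d$, with at most $4\phi(d)$ slopes) to a constant number of lines or strips through or near $\pert^{-1}(p)$, observe that each of these perturbs to a Horton set of size $O(\sqrt{n}/d)$ by Properties~2 and~3 and Lemma~\ref{lemma:sqHortonUnion}, apply Lemma~\ref{lemma:hortonincident} inside each, and sum over slopes. However, your ``crucial change'' rests on a misquotation of Lemma~\ref{lemma:hortonincident}: you claim that a fixed point of a Horton set of size $s$ is incident to only $O(\log s)$ empty triangles. The lemma actually states $O(s\log s)$, and the bound you attribute to it is not just unproven but false: a Horton set of size $s$ spans $\Theta(s^2)$ empty triangles, so the average vertex is incident to $\Theta(s)$ of them, ruling out any polylogarithmic per-vertex bound. (The $O(\log s)$ count arises in that lemma's proof only in the sub-case where the fixed point is the apex $p_k$; when it is an endpoint of the visible edge there are up to $s/2$ choices for the apex, which is where the linear factor comes from.)

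This error is repairable, and you essentially supplied the repair yourself: with the correct inner bound $O\big((\sqrt{n}/d)\log(\sqrt{n}/d)\big)$ per line or strip, the sum over slopes becomes $\sum_{d=1}^{\sqrt{n}}\phi(d)\cdot O\big((\sqrt{n}/d)\log(\sqrt{n}/d)\big)$, which is exactly the quantity shown to be $O(n)$ in Lemma~\ref{lemma:calc_horton_stabs} --- the ``a fortiori'' fallback you mention. After this substitution, your argument for heights $0$ and $1$ coincides with the paper's proof. Your confinement bookkeeping is also fine: for height $1$ the point $\pert^{-1}(p)$ must lie on a boundary line of the strip (two strips per slope), and for height $2$ your flagged worry resolves exactly as you suspect --- the base line is either the grid line through $\pert^{-1}(p)$ or one of the two grid lines two away from it (when $p$ is the apex), so at most three candidate base lines per slope; this is precisely what the paper states. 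The only cosmetic difference after correction is in height $2$: the paper there counts directly (per double-strip, at most one triangle per choice of apex, giving $\sum_d \phi(d)\cdot O(\sqrt{n}/d)=O(n)$), whereas your route applies Lemma~\ref{lemma:hortonincident} to the Horton set spanned by the base and apex lines, which also works since Lemma~\ref{lemma:sqHortonUnion} holds for any two parallel lines spanned by $G$, not only adjacent ones.
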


\begin{proof}
 Let $H$ be a squared Horton set of $n$ points. We will show that
 every point $p$ of $H$ is incident to $O(n)$ empty triangles of $H$.

 We start with the number of such triangles of height zero.
	For each such triangle $\Delta$, $\pert^{-1}(p)$ has to be on the same
	line as $\pert^{-1}(\Delta)$. For each $1 \le d \le \sqrt{n}$, there are at most $4 \cdot \phi(d)$ lines spanned by $G$
	through $\pert^{-1}(p)$, each with at most $\sqrt{n}/d+1$ points of $G$. 
	For each such line $\ell$, the point set $\pert(\ell \cap G)$ is a Horton set
	which by Lemma~\ref{lemma:hortonincident} has at most $O((\sqrt{n}/d) \cdot \log(\sqrt{n}/d))$ empty triangles incident to $p$.
	Hence, summing over all possible slopes, the number of empty triangles in $H$ incident to $p$ and having height zero is at most
 \begin{eqnarray*}
	 \sum_{d=1}^{\sqrt{n}} \phi(d) \cdot O\big((\sqrt{n}/d) \cdot \log_2(\sqrt{n}/d)\big) & = & O(n). \\
 \end{eqnarray*}
 
We next consider the triangles of height one. 
	For $p$ to be incident to $\Delta$, $\pert^{-1}(p)$ has to lie on the boundary of the strip defining the height of $\pert^{-1}(\Delta)$.
For each slope there are at most two relevant such strips for $p$, each spanning a Horton set in $H$. 
	By a similar counting as above we again get a bound of $O(n)$ on the number of empty triangles in $H$ of height one incident to $p$.

Finally, we consider the triangles of height $2$.
For each slope there are at most $3$ grid lines in $G$ that could contain
	the basis of an interior-empty triangle incident to $\pert^{-1}(p)$, 
namely, the line $\ell$ containing $\pert^{-1}(p)$ and the lines $\ell_1,\ell_2$ which are $2$ apart from $\ell$.
	Further, for each such double-strip, the number of empty triangles in $H$ that is incident to $p$ is bounded from above by the number of points on the boundary of the double-strip.
	Hence summing up over all slopes and relevant double-strips gives $O(n)$ empty triangles in $H$ that have height $2$ and are incident to $p$, which completes the proof.
\end{proof}

\section{{\wasylozenge}-squared Horton sets}\label{sec:diamondssquaredhortonset}

\begin{figure}
	\centering
	\includegraphics[scale=1]{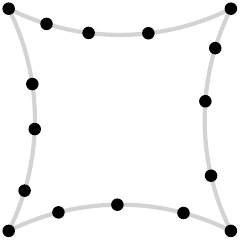}
	\caption{ A {\wasylozenge} point set.	}
    \label{fig:wasylozenge}
\end{figure}
We denote by {\wasylozenge} a point set obtained by placing four points on the corners of a square and adding further 
points along four slightly concave arcs between adjacent corners, 
such that on each arc there is almost the same number of points.
An example is depicted in Figure~\ref{fig:wasylozenge}.

\begin{defini} 
Let $H$ be a squared Horton set with $m$ points.
Let $H_{\wasylozenge}$ be the set we obtain by replacing every point of $H$ 
 by a small $\wasylozenge$ with $k$ points.
We denote the points of $H$ by $p_i$ and the corresponding $\wasylozenge$ by $\wasylozenge_i$.
Then $H_{\wasylozenge}$ is a \text{{\wasylozenge}-\emph{squared}} \emph{Horton set} if the following properties hold.
\begin{enumerate}
\item\label{enum:orient} For any pairwise different $i, j, l \in \{1,\dots, m\}$, any point triple $q_i \in \wasylozenge_i$, $q_j\in \wasylozenge_j$, $q_l \in \wasylozenge_l$ has the same orientation as 
$p_i,p_j,p_l$.
\item\label{enum:arc} The arcs of each~$\wasylozenge_i$ are such that for any $\wasylozenge_j$ with $i\neq j$ there is an arc of $\wasylozenge_i$ and arc of $\wasylozenge_j$ which form a convex set.
$\Conv(\wasylozenge_i \cup \wasylozenge_j)\cap \Conv(\wasylozenge_i \cup \wasylozenge_l)$.
\item\label{point_on_line} For any five pairwise different $a, b, c, d, e \in \{1, \dots, m\}$ the following holds
$\Conv(\wasylozenge_a\cup\wasylozenge_b)\cap\Conv(\wasylozenge_a\cup\wasylozenge_c)\cap\Conv(\wasylozenge_d\cup\wasylozenge_e)=\emptyset$. 
\item\label{three_cross} For any six pairwise different $a, b, c, d, e, f \in \{1, \dots, m\}$ the following holds
$\Conv(\wasylozenge_a\cup\wasylozenge_b)\cap\Conv(\wasylozenge_c\cup\wasylozenge_d)\cap\Conv(\wasylozenge_e\cup\wasylozenge_f)=\emptyset$. 
\end{enumerate}
\end{defini}
Observe, that $H_{\wasylozenge}$ has $n=km$ points if $H$ has $m$ points and $\wasylozenge$ consists of $k$ points.
Since the points of $H$ are in general position it is possible to choose the $\wasylozenge$ small enough, such that the Properties~\ref{enum:orient},~\ref{point_on_line} and \ref{three_cross} hold.
Property~\ref{enum:arc} holds if all $\wasylozenge$ are aligned in the same way.

\begin{restatable}{lemma}{wasHortonEmpty}\label{lemma:nbr_empty}
Let $H_{\wasylozenge}$ be a ${\wasylozenge}$-squared Horton set, where the underlying squared Horton set has $m$ points and each $\wasylozenge$ consists of $k$ points.
Then the number of empty triangles in $H_{\wasylozenge}$ is $\Theta(m^2k^3)$.
\end{restatable}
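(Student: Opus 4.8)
The plan is to prove matching bounds $O(m^2k^3)$ and $\Omega(m^2k^3)$. A preliminary step I would carry out first is to fix a convenient diamond size: since $H$ is in general position, the minimum distance between a point $p_l$ and a segment $p_ip_j$ spanned by two other points of $H$ is a positive constant, so I would choose the diameter of the $\wasylozenge$'s below that constant. With this choice two useful facts hold for the fixed set $H$: a triangle whose three vertices lie in three distinct diamonds $\wasylozenge_i,\wasylozenge_j,\wasylozenge_l$ can only be empty if $p_ip_jp_l$ is an empty triangle of $H$ (otherwise some $p_{l'}$ is interior to $p_ip_jp_l$ and then all of $\wasylozenge_{l'}$ sits inside the triangle), and the sliver $\Conv(\wasylozenge_i\cup\wasylozenge_j)$ of any pair of diamonds contains no point of a third diamond.

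For the upper bound I would classify every empty triangle of $H_{\wasylozenge}$ by how its vertices are distributed among diamonds: all three in one diamond, two in one and one in another, or one in each of three diamonds. The first type contributes at most $m\binom{k}{3}=O(mk^3)$, since a triangle inside a single small $\wasylozenge$ is empty in $H_{\wasylozenge}$ exactly when it is empty within that $\wasylozenge$. The second type contributes at most $m(m-1)\binom{k}{2}k=O(m^2k^3)$ simply by counting, with no emptiness needed. For the third type I would invoke the preliminary observation together with the theorem of B\'ar\'any and Valtr~\cite{BV2004} that the squared Horton set $H$ has only $\Theta(m^2)$ empty triangles: each empty triple $p_ip_jp_l$ admits at most $k^3$ completions, so this type contributes $O(m^2k^3)$. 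Summing the three gives $O(m^2k^3)$; note that the $\Theta(m^2)$ bound on $H$ is essential here, as the naive bound $\binom{m}{3}k^3$ would be too large.

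For the lower bound it suffices to exhibit $\Omega(m^2k^3)$ empty triangles of the two-in-one type. For each of the $\binom{m}{2}=\Theta(m^2)$ pairs of diamonds $\wasylozenge_i,\wasylozenge_j$, let $A_i$ and $A_j$ be the arcs of $\wasylozenge_i$ and $\wasylozenge_j$ that face each other; by construction these arcs are slightly concave toward the centre of their own diamond, and by Property~\ref{enum:arc} they together form a convex set. I would then claim that for any two points $a_1,a_2\in A_i$ and any non-corner point $b\in A_j$ the triangle $a_1a_2b$ is empty, arguing as follows: the points of $A_i$ strictly between $a_1$ and $a_2$ are pushed toward the centre of $\wasylozenge_i$, hence lie on the side of the chord $a_1a_2$ away from $\wasylozenge_j$ and outside the triangle; every remaining point of $\wasylozenge_i$ lies outside the thin wedge subtended at $b$; the points of $\wasylozenge_j$ other than $b$ lie behind $b$ relative to $\wasylozenge_i$ and so outside the triangle as well; and no third diamond meets $\Conv(\wasylozenge_i\cup\wasylozenge_j)$ by the choice of diamond size. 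This yields $\Theta(k^2)$ choices for $\{a_1,a_2\}$ and $\Theta(k)$ choices for $b$, i.e.\ $\Theta(k^3)$ empty triangles per pair of diamonds and $\Omega(m^2k^3)$ in total.

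The hard part will be the geometric verification in the lower bound that all $\Theta(k^3)$ candidate triangles are simultaneously empty, since this is exactly where the concave-arc shape of the $\wasylozenge$ is doing the work. Property~\ref{enum:orient} is what guarantees that the ``seen from far away'' picture of the clusters agrees with the orientation of $p_i,p_j,p_l$ in $H$, while the concavity underlying Property~\ref{enum:arc} is what forces the intermediate arc points to the non-triangle side of the chord; I would want to state these two emptiness checks (local to $\wasylozenge_i$ and local to $\wasylozenge_j$) carefully and note that they factorize because the sliver between the diamonds is empty. A secondary technical point to handle cleanly is that a single diamond size must be fixed uniformly for the given $H$, which the general-position argument of the first paragraph settles.
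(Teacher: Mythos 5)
Your proposal is correct and takes essentially the same route as the paper: the same three-way classification of empty triangles by how their vertices distribute over the $\wasylozenge$'s, the same use of the $\Theta(m^2)$ bound on empty triangles of the underlying squared Horton set to handle the three-diamond case, and the same lower bound via the two facing arcs of a pair of diamonds forming an empty convex point set. The only differences are cosmetic: you re-derive smallness facts that the definition of $\wasylozenge$-squared Horton sets already provides as Properties~\ref{enum:orient}--\ref{three_cross}, and you verify emptiness of the arc triangles by a hands-on chord/wedge argument where the paper simply invokes convex position of the union of the two arcs.
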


\begin{proof}
We split the empty triangles of $H_{\wasylozenge}$ into three groups,
depending on the number of different $\wasylozenge$ subsets of $H_{\wasylozenge}$ that contain vertices of a triangle. 

\emph{Case 1.} Triangles spanned by three points of $\wasylozenge_i$, for $i \in \{1,\dots,m\}$.
Each $\wasylozenge_i$ spans $O(k^3)$ such empty triangles. Summing up over the $m$ different subsets 
$\wasylozenge_1,\dots,\wasylozenge_m$ yields $O(m k^3)$ empty triangles of $H_{\wasylozenge}$ for this case.

\emph{Case 2.} Triangles spanned by two points in $\wasylozenge_i$ and one point in $\wasylozenge_j$, for $i\neq j \in \{1,\dots,m\}$.
Note that every such triangle does not have any point of $\wasylozenge_l$ with 
$l \neq i,j$ in its interior due to the first property of $\wasylozenge$-squared Horton sets.
	There are $\Theta(m^2)$ pairs $(\wasylozenge_i, \wasylozenge_j)$.
For each of $\wasylozenge_i$ and $\wasylozenge_j$, there are at most $k$ choices for a vertex of an empty triangle.
This means, we have $O(m^2k^3)$ empty triangles in this case.
On the other hand, due to the third property of $\wasylozenge$-squared Horton sets, an arc of $\wasylozenge_i$ and an arc of $\wasylozenge_j$ form a convex point set. This convex point set is empty by construction. Further, each of the arcs has at least $k/4$ points. 
This gives us $\Omega(m^2k^3)$ empty triangles in this case. 
So $H_{\wasylozenge}$ contains $\Theta(m^2k^3)$ empty triangles which are spanned by two $\wasylozenge$s.

\emph{Case 3.} Triangles spanned by one point in each of $\wasylozenge_i, \wasylozenge_j, \wasylozenge_l$, for pairwise different $i, j, l \in \{1,\dots,m\}$.
Then $p_i,p_j,p_l$ is an empty triangle of $H$. 
For each of $p_i$, $p_j$, and $p_l$, we have 
at most $k$ choices for a point of the corresponding $\wasylozenge$ such that the resulting triangle of $H_{\wasylozenge}$ 
is empty. As $H$ has $\Theta(m^2)$ empty triangles, we have $O(m^2k^3)$ empty triangles of $H_{\wasylozenge}$ for this case.
\end{proof}

\begin{restatable}{lemma}{lemWasHortonStabbed}\label{lemma:nbr_stabbed}
Let $H_{\wasylozenge}$ be a ${\wasylozenge}$-squared Horton set, where the underlying squared Horton set has $m$ points and each $\wasylozenge$ consists of $k$ points.
Then every point of the plane stabs $O(mk^3)$ empty triangles of $H_{\wasylozenge}$.
\end{restatable}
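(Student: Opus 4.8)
The plan is to fix an arbitrary point $q$ of the plane and to bound separately the number of stabbed empty triangles of each of the three types identified in the proof of Lemma~\ref{lemma:nbr_empty}: those spanned inside a single $\wasylozenge_i$ (Case~1), those using two points of some $\wasylozenge_i$ and one point of $\wasylozenge_j$ (Case~2), and those with one vertex in each of three distinct diamonds (Case~3). The central bookkeeping device will be the graph $E_q$ on vertex set $\{1,\dots,m\}$ whose edges are the pairs $\{i,j\}$ with $q\in\Conv(\wasylozenge_i\cup\wasylozenge_j)$. Properties~\ref{point_on_line} and~\ref{three_cross} of a $\wasylozenge$-squared Horton set translate directly into structural restrictions on $E_q$: Property~\ref{three_cross} forbids three pairwise vertex-disjoint edges (a matching of size $3$), and Property~\ref{point_on_line} forbids two edges sharing a vertex together with a third edge disjoint from both. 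A short combinatorial argument then shows that any such graph has a vertex cover $C$ with $|C|\le 3$: if some vertex $a$ has degree at least two, via $\{a,b\}$ and $\{a,c\}$, then every edge must meet $\{a,b,c\}$; otherwise $E_q$ is a matching of at most two edges. In particular $|E_q|=O(m)$.

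Cases~1 and~2 are then immediate. Since the diamonds are pairwise disjoint convex sets, $q$ lies in $\Conv(\wasylozenge_a)$ for at most one index $a$, so the Case~1 triangles stabbed by $q$ all live inside that single diamond and number at most $\binom{k}{3}=O(k^3)$. For Case~2, any stabbed triangle with two vertices in $\wasylozenge_i$ and one in $\wasylozenge_j$ forces $q\in\Conv(\wasylozenge_i\cup\wasylozenge_j)$, i.e.\ $\{i,j\}\in E_q$; each such pair supports at most $2\binom{k}{2}k=O(k^3)$ triangles of this shape, so the total is $|E_q|\cdot O(k^3)=O(mk^3)$.

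The hard part is Case~3, where I would reduce to the two counting results already proved for the underlying squared Horton set $H$. If $q$ stabs an empty triangle $q_iq_jq_l$ with $q_i\in\wasylozenge_i$, $q_j\in\wasylozenge_j$, $q_l\in\wasylozenge_l$, then $\Conv(q_i,q_j,q_l)\subseteq\Conv(\wasylozenge_i\cup\wasylozenge_j\cup\wasylozenge_l)$, and by Property~\ref{enum:orient} the shadow triangle $p_ip_jp_l$ is an empty triangle of $H$ (as in Case~3 of Lemma~\ref{lemma:nbr_empty}). Because the diamonds are small, $\Conv(\wasylozenge_i\cup\wasylozenge_j\cup\wasylozenge_l)$ is contained in the union of the triangle $p_ip_jp_l$ and its three edge-slivers $\Conv(\wasylozenge_i\cup\wasylozenge_j)$, $\Conv(\wasylozenge_j\cup\wasylozenge_l)$, $\Conv(\wasylozenge_i\cup\wasylozenge_l)$, so the triple $\{i,j,l\}$ falls into one of two groups. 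If $q$ lies in the triangle $p_ip_jp_l$ itself, then $q$ stabs an empty triangle of $H$, and Theorem~\ref{thm:horton_stabs} bounds the number of such triples by $O(m)$. Otherwise one of the three pairs is an edge of $E_q$, so the triple contains a vertex of the cover $C$; hence its shadow triangle is an empty triangle of $H$ incident to one of the at most three points $p_a$ with $a\in C$, and Lemma~\ref{lemma:sqHorton_incident} bounds the number of these by $O(m)$ as well. Thus at most $O(m)$ triples are relevant, each contributing at most $k^3$ triangles, for a Case~3 total of $O(mk^3)$.

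Adding the three bounds gives $O(k^3)+O(mk^3)+O(mk^3)=O(mk^3)$, as claimed. The step I expect to require the most care is the geometric containment in Case~3, namely verifying that the fattened triangle $\Conv(\wasylozenge_i\cup\wasylozenge_j\cup\wasylozenge_l)$ is covered by the true triangle together with its three pairwise slivers, so that a point not stabbing the shadow triangle is forced into some $\Conv(\wasylozenge_i\cup\wasylozenge_j)$. This, combined with the combinatorial translation of Properties~\ref{point_on_line} and~\ref{three_cross} into the vertex-cover bound, is exactly what lets me invoke the incidence bound of Lemma~\ref{lemma:sqHorton_incident} rather than attempt a hopeless per-edge count of empty triangles of $H$.
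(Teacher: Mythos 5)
Your proposal is correct in substance and reaches the stated bound, but it organizes the heart of the argument (your Case~3) genuinely differently from the paper. The paper fixes the stabbing point $s$ and replaces it by a single surrogate point $s'$: either $s$ itself, a point $p_i$ of $H$, the crossing $p_ip_j\cap p_ap_b$, or a suitably chosen point on a segment $p_ip_j$, where Properties~\ref{point_on_line} and~\ref{three_cross} guarantee that $s'$ is well defined; it then argues that $s'$ lies in the closed shadow triangle of every stabbed triple and finishes with three subcases, invoking Theorem~\ref{thm:horton_stabs} when $s'$ is generic, Lemma~\ref{lemma:sqHorton_incident} when $s'$ is a point of $H$, and a small perturbation trick (two auxiliary points $s'_1,s'_2$) when $s'$ lies on a line of $H$. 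You instead turn Properties~\ref{point_on_line} and~\ref{three_cross} into a combinatorial statement about the sliver graph $E_q$ (vertex cover of size at most~$3$, hence $|E_q|=O(m)$), and split the triples into those whose shadow triangle is stabbed by $q$ (Theorem~\ref{thm:horton_stabs}) and those whose shadow triangle is incident to one of the at most three cover points (Lemma~\ref{lemma:sqHorton_incident}, applied three times). Your route buys a cleaner bookkeeping: it dispenses with the well-definedness discussion for $s'$ and with the paper's Case~3c perturbation entirely, and your $|E_q|=O(m)$ bound also gives Case~2 directly (the paper uses a separate ray-shooting argument there); the paper's route buys a single geometric reduction point, after which no covering statement about three slivers needs to be articulated separately.

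Two points deserve attention. First, the geometric containment you flag yourself --- that $\Conv(\wasylozenge_i\cup\wasylozenge_j\cup\wasylozenge_l)$ is covered by the shadow triangle together with its three edge slivers --- is indeed the crux, and it is not a consequence of Properties~\ref{enum:orient}--\ref{three_cross} alone as literally stated: it additionally uses that each $\wasylozenge_a$ is a small cluster with $p_a\in\Conv(\wasylozenge_a)$. It is true in that intended setting, and the paper's own proof silently relies on the equivalent fact (its observation that $s'$ lies on the same side of every line $p_ap_b$ as $s$, hence in the closed triangle $p_ip_jp_l$ whenever $s$ stabs $q_iq_jq_l$), so your proof is at the same level of rigor as the paper's on this step. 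Second, a minor boundary issue: if $q$ lies on the boundary of the shadow triangle it stabs no empty triangle of $H$, so your dichotomy should read ``open triangle or sliver''; since $p_a\in\Conv(\wasylozenge_a)$, the boundary of $p_ip_jp_l$ is contained in the three slivers, so such triples fall into your second group and the count is unaffected.
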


\begin{proof}
We fix a stabbing point $s$.
We split the empty triangles into three groups: three points of one $\wasylozenge$, two points in one $\wasylozenge$ and the third one in a different $\wasylozenge$, and all three points in different $\wasylozenge$s.

\emph{Case 1.} All points in one $\wasylozenge$.
If all points are in one $\wasylozenge$, then a point $s$ stabs at most $k^3$ such empty triangles since the convex hulls of the $\wasylozenge$s do not intersect.

\emph{Case 2.} Two points in one $\wasylozenge$ and the third point $q$ in a different $\wasylozenge$.
We draw a half ray $\ell$ starting from $q$ and through $s$.
If $a,b$ are points of a $\wasylozenge$, which is not intersected by $\ell$ then $s$ does not stab the triangle $abq$.
Further, $\ell$ intersects at most one other $\wasylozenge$ since the points of the underlying squared Horton set are in general position.
So we have $mk$ choices for $q$ and $O(k^2)$ choices for the other two points of the triangle, so that the triangle is stabbed by $s$. 

\emph{Case 3.} All points in different $\wasylozenge$s.
We merge the points of each $\wasylozenge_i$ into one point $p_i$.
The result of this merging is a squared Horton set $H$.
Further we define a point $s'$ with the following properties:
\begin{itemize}
\item $s':=s$ if $s \notin \Conv(\wasylozenge_i \cup \wasylozenge_j)$ for any $i,j\in \{1,\dots,m\}$,

\item $s':=p_i$, with $i \in \{1,\dots,m\}$ if there exist two different $j, l \in \{1,\dots,m\}\backslash\{i\}$ with ${s \in \Conv(\wasylozenge_i \cup \wasylozenge_j)\cap \Conv(\wasylozenge_i \cup \wasylozenge_l)}$,

\item $s':=p_ip_j\cap p_ap_b$, with pairwise distinct $a,b,i,j \in \{1,\dots,m\}$ if it holds that $s \in \Conv(\wasylozenge_i \cup \wasylozenge_j)\cap \Conv(\wasylozenge_a \cup \wasylozenge_b)$.

\item $s'\in p_ip_j$ if $s \in \Conv(\wasylozenge_i \cup \wasylozenge_j)$ such that $s'$ is on the same side of the line $p_ap_b$ as $s$ for any $a,b \in \{1,\dots,m\}\backslash\{i,j\}$.
\end{itemize}

If $s'=p_i$ then by Property~\ref{point_on_line} there do not exist $a,b \in \{1, \dots, m\}\backslash\{i\}$ such that $s\in\Conv(\wasylozenge_a \cup \wasylozenge_b)$.
Further, if $s'=p_ip_j\cap p_ap_b$ then by Property~\ref{three_cross} there do not exist $c, d\in \{1, \dots, m\}\backslash\{a,b,i,j\}$ such that $s\in\Conv(\wasylozenge_c \cup \wasylozenge_d)$.
So $s'$ is well defined.

Observe, that $s'$ is on the same side of the line $p_ap_b$ as $s$ or on the line $p_ap_b$ for any $a,b \in \{1,\dots,m\}$.
This means, if $s$ stabs $q_iq_jq_l$ with $q_i \in \wasylozenge_i$, $q_j \in \wasylozenge_j$ and $q_l \in \wasylozenge_l$, then $s'$ is in the interior or on the boundary of the corresponding triangle $p_ip_jp_l$. 
We have three cases:

\emph{Case 3a.} $s'$ is neither a point of $H$ nor a on a line segment spanned by two points of $H$.
By Theorem~\ref{thm:horton_stabs} $s'$ stabs $O(m)$ empty triangles of $H$.
Let $p_i,p_j,p_k$ be the points of a triangle of $H$ stabbed by $s'$.
There are at most $k$ choices to select a point of $\wasylozenge_i,\wasylozenge_j,\wasylozenge_k$, respectively.
So $s$ stabs $O(mk^3)$ such empty triangles.

\emph{Case 3b.} $s'$ is a point of $H$.
By Lemma~\ref{lemma:sqHorton_incident}, $s'$ is incident to $O(m)$ empty triangles of $H$.
Let $p_i,p_j,p_k$ be the points of a triangle of $H$ incident to~$s$.
There are at most $k$ choices to select a point of $\wasylozenge_i,\wasylozenge_j,\wasylozenge_k$, respectively.
So $s$ stabs $O(mk^3)$ such empty triangles.

\emph{Case 3c.} $s'$ is on a line $p_ip_j$.
We define two points $s'_1$ and $s'_2$ such that both are close to $s'$ and $s'_1$ and $s'_2$ are on different sides of any line $p_ap_b$ passing through $s'$, for $a,b \in \{1,\dots,m\}$.
Observe that any triangle containing $s'$ (in the interior or on the boundary) also contains $s'_1$ or $s'_2$.
Also observe that $s'_1$ and $s'_2$ are neither points of $H$ nor on a line spanned by points of $H$.
So $s'_1$ and $s'_2$ stab $O(m)$ empty triangles, respectively.
There are at most $k$ choices to select a point of $\wasylozenge_i,\wasylozenge_j,\wasylozenge_k$, respectively.
Again we get, that $s$ stabs $O(mk^3)$ such empty triangles.

Adding up all cases $s$ stabs $O(mk^3)$ triangles.
\end{proof}

With these lemmata we can finally show our main result.

\begin{namedproof}{Proof of Theorem~\ref{thm:LensSquaredHortonSets}}
Consider a $\wasylozenge$-squared Horton set where the underlying squared Horton set consists of $m=n^{\alpha}$ points and each of the $\wasylozenge$'s consist of $k=n^{1-\alpha}$ points.
By Lemma~\ref{lemma:nbr_empty} there are $\Theta(m^2k^3)=\Theta(n^{3-\alpha})$ empty triangles in $S$.
By Lemma~\ref{lemma:nbr_stabbed} every point stabs $O(mk^3)$ empty triangles.
Hence every point stabs $O(n^{3-2\alpha})$ empty triangles.
\end{namedproof}

\paragraph{Acknowledgements.} 
Research on this work has been initiated at a workshop of the H2020-MSCA-RISE project 73499 - CONNECT, held in Barcelona in June 2017.
We thank all participants for the good atmosphere as well as for discussions on the topic.
\bibliographystyle{abbrv} 
\bibliography{selectionbib}

\end{document}